\def\doi{8 (1:01) 2012}
\newcommand{\rr}{\mathbb R}
\newcommand{\br}{\overline{\mathbb R}}
\newcommand{\pp}{\mathcal P}
\newcommand{\parvec}{(\pp(\rd))^n}
\newcommand{\rd}{\rr^d}
\newcommand{\x}{\underline{x}}
\newcommand{\y}{\underline{y}}
\newcommand{\vv}{\underline{v}}
\newcommand{\lya}{\underline{L}}
\newcommand{\FS}{\mathcal{FS}(P,\br)^n}
\newcommand{\fr}{\mathcal{F}(P,\br)}
\newcommand{\fri}{\mathcal{F}(P,\rr\cup\{+\infty\})}
\newcommand{\frf}{\mathcal{F}(P,\rr)}
\newcommand{\frp}{\mathcal{F}(P,\rr_+)}
\newcommand{\ve}{\operatorname{Vex_P}}
\newcommand{\vve}{\operatorname{vex_P}}
\newcommand{\vep}{\ve(P\mapsto\br)}
\newcommand{\ved}{\ve(\rd)}
\def\norm#1{\mbox{$\| #1 \|$}}
\def\clo#1{\vve(#1)}
\def\lfp#1{\operatorname{lfp} #1}
\def\rel#1{#1^{\mathcal R}}
\def\sha#1{#1^{\sharp}}
\def\calR{\mathcal R}
\def\Min{\operatorname*{Min}}
\def\affect{\mathbb{A}}
\def\inter{\mathbb{I}}
\def\union{\mathbb{U}}
\theoremstyle{remark}
\newtheorem{example}[thm]{Example}
\newtheorem{remark}[thm]{Remark}
\begin{document}

\title[Accurate numerical invariants]{\phantom{Coupling policy iteration with
  semi-definite}\vspace{-12 pt}Coupling policy iteration with
  semi-definite relaxation to compute accurate numerical invariants in
  static analysis}

\author[A.~Adj\'e]{Assal\'e Adj\'e\rsuper a}
\address{{\lsuper a}LSV, CNRS \& ENS de Cachan,
61, avenue du Pr{\'e}sident Wilson,
F-94235 Cachan Cedex, France}
\email{assale.adje@lsv.ens-cachan.fr}

\author[S.~Gaubert]{St\'ephane Gaubert\rsuper b}
\address{{\lsuper b}INRIA Saclay and CMAP, Ecole Polytechnique, 
F-91128 Palaiseau Cedex, France}
\email{Stephane.gaubert@inria.fr}

\author[E.~Goubault]{Eric Goubault\rsuper c}
\address{{\lsuper c}CEA, LIST (MeASI), 
F-91191 Gif-sur-Yvette Cedex, France}
\email{Eric.Goubault@cea.fr}
\thanks{{\lsuper{a,b,c}}This work was performed when the first author was with the MeASI team of CEA, LIST and with CMAP, \'Ecole Polytechnique, being supported by a PhD Fellowship of the R\'egion \^Ile-de-France. This work was also partially
supported by the Arpege programme of the French National Agency of Research
(ANR), project ``ASOPT'', number ANR-08-SEGI-005 and by the Digiteo project
DIM08 ``PASO'' number 3389.}
\keywords{abstract interpretation; policy iteration; convex programming;
quadratic programming; semi-definite programming; Lyapunov functions}
\subjclass{F.3.2}

\begin{abstract}
We introduce a new domain for finding precise numerical invariants of
programs by abstract interpretation. This domain, which consists of
sub-level sets of non-linear functions, generalizes the domain of
linear templates introduced by Manna, Sankaranarayanan, and Sipma.  In
the case of quadratic templates, we use Shor's semi-definite
relaxation to derive safe and computable abstractions of semantic
functionals, and we show that the abstract fixpoint can be
over-approximated by coupling policy iteration and semi-definite
programming. We demonstrate the interest of our approach on a series
of examples (filters, integration schemes) including a degenerate one
(symplectic scheme).
\end{abstract}

\maketitle

\section{Introduction}

\label{intro}

We introduce a complete lattice consisting of sub-level sets of
(possibly non-convex) functions, which we use as an abstract
domain in the sense of abstract interpretation \cite{CC77}
for computing
numerical program invariants.
This abstract domain is parametrized by a basis of functions, akin to
the approach put forward by Manna, Sankaranarayanan, and Sipma (the
linear template abstract domain \cite{Sriram1} see also~\cite{Sriram2}), 
except that the basis functions or
templates which we use here need not be linear. The domains obtained in this way encompass the classical abstract
domains of intervals, octagons and linear templates. 

To illustrate the interest of this generalization, 
let us consider a harmonic oscillator: $\ddot{x}+c\dot{x}+x=0$.
By taking an explicit Euler scheme, and for $c=1$
we get the program shown at the left of Figure~\ref{running}.

\begin{figure}
\begin{minipage}{5cm}
\begin{center}
\begin{lstlisting}[frame=single]
x = [0,1];
v: = [0,1];
h = 0.01;
while (true) { [2]
  w = v;
  v = v*(1-h)-h*x;
  x = x+h*w; [3] }
\end{lstlisting}
\end{center}
\end{minipage}
\begin{minipage}{7cm}
\begin{center}
\vskip-3mm\resizebox{5cm}{!}{\input Iteration5bistex}
\end{center}
\end{minipage}
\caption{Euler integration scheme of a harmonic oscillator and the loop invariant found at control point 2}
\label{running}
\end{figure}

The invariant found with our method is shown right of Figure \ref{running}. 
For this, we have considered the template based on functions
$\{x^2,v^2,2x^2+3v^2+2xv\}$, i.e. we consider a domain
where we are looking for upper bounds of these quantities. 
This means that we consider
the {\em non-linear} quadratic homogeneous templates based on $\{x^2,v^2\}$, i.e. symmetric intervals
for each variable of the program, together with the {\em non-linear}
template $2x^2+3v^2+2xv$. The last template comes from the Lyapunov
function that the designer of the {\em algorithm} may have considered
to prove the stability of his scheme, {\em before it has been implemented}. 
This allows us to represent set defined by constraints of the form $x^2\leq c_1$, $v^2\leq c_2$ 
and $2x^2+3v^2+2xv\leq c_3$ where $c_1$, $c_2$ and $c_3$ are degrees of freedom.  
In view of {\em proving the implementation correct}, one is naturally
led to considering such templates\footnote{Of course, as for the linear templates
of~\cite{Sriram1,Sriram2}, we can be interested in automatically finding
or refining
the set of templates considered to achieve a good precision of the abstract
analysis, but this is outside the scope of this article.}. 
Last but not least, it is to be noted
that the loop invariant using intervals, zones, octagons or even polyhedra 
(hence with any linear template)
is the insufficiently strong invariant $h=0.01$
(the variables $v$ and $x$ cannot
be bounded.) 
However, the main interest of the present method
is to carry over to the non-linear setting. For instance, we include in our
benchmarks a computation of invariants (of the same quality) for an implementation
of a highly degenerate example (a symplectic integration scheme, for which
alternative methods fail due to the absence of stability margin).

\paragraph{Contributions of the paper}

We describe the lattice theoretical operations in terms of Galois
connections and generalized convexity in Section~\ref{levsetabs}.
We also show that in the case of a basis of quadratic functions, a good
over-approximation $\rel{F}$ of the abstraction $\sha{F}$ of a semantic functional $F$ can be computed 
by solving a semi-definite program (Section~\ref{quadratic}). 
This over-approximation is obtained using Shor's relaxation.
The advantage of the latter is that it can
be solved in polynomial time to an arbitrary prescribed precision by
the ellipsoid method~\cite{GroetschelLovaszSchrijver1988},
or by interior point methods~\cite{NestNemi} if a strictly feasible
solution is available (however, we warn the reader that interior points
methods, which are more efficient in practice, are known to be polynomial only in the real number model, not in the bit model, 
see the survey~\cite{RaP:97} for more information).

Moreover, we show in Subsection~\ref{politer} that 
the vectors of Lagrange multipliers produced by this relaxation correspond to 
policies, in a policy iteration technique for finding fixpoints or at least
postfixpoints of $\rel{F}$, precisely over-approximating the fixpoints of $\sha{F}$.
Finally, we illustrate on examples (linear recursive filters, numerical integration
schemes) that policy iteration on such quadratic templates is 
efficient and precise in practice, 
compared with Kleene iteration with widenings/narrowings. The fact that
quadratic templates are efficient on such algorithms is generally due to 
the existence of (quadratic) Lyapunov functions that prove their stability.
The method has been implemented as a set of \texttt{Matlab} programs. 

\paragraph{Related work}
This work is to be considered as a generalization of 
\cite{Sriram1}, \cite{Sriram2} 
because it extends the notion of templates to non-linear functions, 
and of 
\cite{Policy1}, \cite{ESOP07}, \cite{Policy2}, \cite{Seidl1} and \cite{Seidl2} since it also
generalizes the
use of policy iteration for better and faster resolution of abstract
semantic equations. Polynomial inequalities (of bounded degree) were
used in \cite{BagnaraR-CZ05} in the abstract interpretation framework 
but the method relies on a reduction to linear inequalities (the polyhedra
domain), hence is more abstract than our domain. 
Particular quadratic inequalities (involving two variables - i.e.
ellipsoidal methods) were used for order 2 linear recursive filters 
invariant generation in \cite{Feret}\footnote{A generalization to order $n$ linear recursive
filters is also sketched in this article.}. 
Polynomial {\em equalities} (and not general functional inequalities as we consider
here) were considered in \cite{Muller,Kapur}.
The use of optimization techniques and relaxations for 
program analysis has also been proposed in
\cite{CousotLagrange}, mostly for synthesising variants for proving
termination, but invariant synthesis was also briefly sketched, with
different methods than ours (concerning the abstract semantics and the
fixpoint algorithm). 
Finally, the interest of using quadratic invariants and in particular
Lyapunov functions for proving control programs correct (mostly in the
context of Hoare-like program proofs) has also been advocated 
recently by E. F\'eron et al. in  \cite{Feron1,Feron2}. 

Finally, we note that a preliminary account of the present work appeared in the conference paper~\cite{adjegaubertgoubault10}.
\section{Lattices of sub-level sets and $P$-support functions }
\label{levsetabs}

We introduce a new abstract domain, parametrized by a basis of
functions ($P$ below). 
The idea is that an abstract value will give the bounds
for each of these functions, hence the name of sub-level sets, with
some abstract convexity condition, Definition \ref{abstractconvexity}. The abstract values 
are computed as the supremum of each function of the basis on a certain sub-level set. The name support functions is 
due to the similarity with the classical support function from convex analysis where the functions 
of the basis are all linear. The idea of using the classical support function in system verification appeared independently in~\cite{AGirard}.

\subsection{$P$-sub-level sets, and their Galois connection with $\pp{(\rr^d)}$}
Let $P$ denote a set of functions from $\rd$ to $\rr$, which is going to be
the basis of our templates. The set $P$ is \emph{not} necessarily \emph{finite} and the functions $p\in P$ 
are \emph{not} necessarily \emph{linear}. We denote by $\fr$
the set of functions $v$ from $P$ to $\br=\rr\cup\{\pm\infty\}$. We equip $\fr$ with the classical partial order for functions i.e.
$v\leq w\iff v(p)\leq w(p)$ for all $p\in P$. We order the set of all subsets of $\rd$ by the subset relation $\subseteq$. 
We define a Galois connection (Proposition~\ref{Galois}) between $\fr$ and the set
of subsets of $\rd$ (made of a concretisation operator $v \mapsto
v^{\star}$, Definition \ref{concretisation} 
and an abstraction operator $C \mapsto C^\dagger$, Definition \ref{abstraction}). This will give the formal background for constructing abstract
semantics using $P$-sub-level sets using abstract interpretation \cite{CC77}, 
in Section~\ref{quadratic}.

\begin{defi}[$P$-sub-level sets]
\label{concretisation}
To a function $v\in\fr$, we associate the $P$-sub-level set denoted 
by $v^{\star}$ and defined as:
\[v^{\star}=\{x\in \rd\mid p(x)\leq v(p),\ \forall p\in P \}\]
\end{defi}

The notion of sub-level set is well known in convex analysis. When $P$ is a set of function that are convex,
the $P$-sub-level sets are convex. In our case, $P$ can contain non-convex functions so $P$-sub-level sets are 
not necessarily convex in the classical sense.

\begin{example}
We come back to the first example and we are focusing on its representation in terms
of $P$-sub-level set.
Let us write, for $(x,v)\in\mathbb{R}^2$, $p_1:(x,v)\mapsto x^2$, $p_2:(x,v)\mapsto v^2$ and
$p_3:(x,v)\mapsto 2x^2+3v^2+2xv$. Let us take $P=\{p_1,p_2,p_3\}$, 
$v(p_1)=3.5000,\ v(p_2)=2.3333$ and $v(p_3)=7$. 
The set $v^{\star}$ is precisely the one shown right of Figure~\ref{running}. 
\end{example}

\begin{example}
We next show some $P$-sub-level sets which are not convex in the
usual sense.      
Let us write, for $(x,y)\in\mathbb{R}^2$, $p_1:(x,y)\mapsto -y^2-(x+2)^2$, $p_2:(x,y)\mapsto -y^2-(x-2)^2$ and
$p_3:(x,y)\mapsto -(y-2)^2-x^2$,  $p_4:(x,y)\mapsto -(y+2)^2-x^2$. Let us take $P=\{p_1,p_2,p_3,p_4\}$ and 
$v(p_1)=v(p_2)=v(p_3)=v(p_4)=-2$. The set $v^{\star}$
is shown Figure~\ref{ex2}.

\begin{figure}
\begin{minipage}{6cm}
\begin{center}
\resizebox{6cm}{!}{\input exqua2tex}
\caption{A $P$-sub-level set arising from non-convex quadratic functions.}
\label{ex2}
\end{center}
\end{minipage}
\begin{minipage}{6cm}
\begin{center}
\resizebox{6cm}{!}{\input exabs0tex}
\caption{A $P$-sub-level set arising from linear forms.}
\label{ex5}
\end{center}
\end{minipage}
\end{figure}
\end{example}

In our case, $P$ is a set of functions from $\rd$ to $\rr$ possibly non-linear, so we generalize the 
concept of support functions (e.g. see Section 13 of~\cite{Roc}). 
The idea of generalizing support functions 
to the non-linear case is not new and this extension is due to Moreau \cite{Moreau}.


\begin{defi}[$P$-support functions]
\label{abstraction}
To $X\subseteq\rd$, we associate the $P$-support function denoted by
$X^{\dag}$ and defined as:
\[X^{\dag}(p)=\sup_{x\in X}p(x)\]
\end{defi}

\begin{prop}
\label{Galois}
The pair of maps $v\mapsto v^{\star}$ and $X\mapsto X^{\dag}$ defines a 
Galois connection between $\fr$ and the set of subsets of $\rd$.
\end{prop}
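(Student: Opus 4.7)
The plan is to verify the standard adjunction characterization of a Galois connection: for every $X \subseteq \rd$ and every $v \in \fr$,
\[
X^{\dag} \leq v \iff X \subseteq v^{\star}.
\]
This single equivalence is equivalent to saying that $(\cdot)^{\dag}$ and $(\cdot)^{\star}$ form a Galois connection in the standard abstract interpretation sense, and in particular it automatically entails that both maps are monotone (for $\subseteq$ and the pointwise order on $\fr$).

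To prove the equivalence, first I unfold the left-hand side. By Definition \ref{abstraction} and the pointwise order on $\fr$, the assertion $X^{\dag} \leq v$ means that $\sup_{x \in X} p(x) \leq v(p)$ holds for every $p \in P$. By the universal property of the supremum in $\br$, each of these inequalities is itself equivalent to $p(x) \leq v(p)$ for every $x \in X$. Thus $X^{\dag} \leq v$ is equivalent to the statement $\forall p \in P,\ \forall x \in X,\ p(x) \leq v(p)$.

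Next I unfold the right-hand side. By Definition \ref{concretisation}, $X \subseteq v^{\star}$ says that every $x \in X$ satisfies $p(x) \leq v(p)$ for every $p \in P$, i.e.\ $\forall x \in X,\ \forall p \in P,\ p(x) \leq v(p)$. This differs from the reformulation above only by a harmless permutation of two universal quantifiers, so the two conditions coincide, establishing the desired equivalence.

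There is no real obstacle; the only mild care needed concerns the boundary conventions. For $X = \emptyset$ one has $\emptyset^{\dag}(p) = \sup_{x \in \emptyset} p(x) = -\infty \leq v(p)$ for every $v$, consistently with $\emptyset \subseteq v^{\star}$; and values $v(p) = \pm\infty$ cause no issue since the inequalities are interpreted in the totally ordered set $\br$. Both Definitions \ref{concretisation} and \ref{abstraction} are therefore well posed as maps $\fr \to \mathcal{P}(\rd)$ and $\mathcal{P}(\rd) \to \fr$, and the equivalence proved above is the Galois connection claimed.
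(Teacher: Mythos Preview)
Your proof is correct and follows essentially the same approach as the paper: both verify the adjunction condition $X^{\dag}\leq v \iff X\subseteq v^{\star}$ directly from the definitions. Your argument is in fact slightly cleaner, reducing both sides to the single formula $\forall p\in P,\ \forall x\in X,\ p(x)\leq v(p)$, whereas the paper checks each implication separately (and establishes monotonicity first rather than deducing it from the adjunction).
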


\begin{proof}
First, the functions $v\mapsto v^{\star}$ and $X\mapsto X^{\dag}$ are clearly monotonic. Now, we have to show that, 
for all $X\subseteq\rd$ and all $v\in\fr$:
$X\subseteq v^{\star}\iff X^{\dag}(p)\leq v(p),\ \forall\, p\in P$.
Let $X\subseteq\rd$, $v\in\fr$ and $p\in P$. We assume that $X\subseteq v^{\star}$. This implies that 
$X^{\dag}(p)=\sup\{p(x)\mid x\in X\}\leq \sup\{p(x)\mid q(x)\leq v(q),\ \forall q\in P\}$. But,
for every $x\in v^{\star}$, $p(x)\leq v(p)$, so $\sup\{p(x)\mid q(x)\leq v(q),\ \forall q\in P\}\leq v(p)$,
hence, $X^{\dag}(p)\leq v(p)$ and the first implication is shown.
Now, we assume that $X^{\dag}(p)\leq v(p),\ \forall\, p\in P$. Let $x\in X$, we have, for all $p\in P$,
$p(x)\leq X^{\dag}(p)$, thus $p(x)\leq v(p)$ and finally $x\in v^{\star}$.
\end{proof}

\begin{remark}
\label{galoisprop}
In the proof, we showed $(v^\star)^\dag\leq v$, we can similarly prove that $X\subseteq (X^\dag)^\star$.
Using this remark and using the monotonicity of $v\mapsto v^{\star}$ and $X\mapsto X^{\dag}$, we get
$v^\star\subseteq ((v^\star)^\dag)^\star\subseteq v^\star$ and similarly, $((X^\dag)^\star)^\dag=X^\dag$.
These properties are well-known in Galois connection theory.
\end{remark}

In the terminology of abstract interpretation, $(.)^\dag$ is the abstraction function, usually denoted by $\alpha$,
and $(.)^\star$ is the concretisation function, usually denoted by $\gamma$.

\subsection{The lattices of $P$-convex sets and $P$-convex functions}

The sets of points in $\rr^d$ which are exactly represented by
their corresponding $P$-sub-level sets are called $P$-convex sets, as in the definition below.   
These can be identified with the set of {\em abstract elements}
we are considering\footnote{Formally, this is the upper-closure in $\pp(\rr^d)$ of the
set of abstract elements.}. We show in Theorem~\ref{lattice} that they
constitute a complete lattice. We use the terminology $P$-convex because of the analogy with the abstract convexity defined 
by Moreau \cite{Moreau}, Singer \cite{Singer} and Rubinov \cite{Rubinov}. These authors define 
convexity without linearity from a given family of (non-linear) functions. Moreau called this notion of convexity,
$B$-convexity where $B$ is the family of functions whereas Singer and Rubinov use the term abstract convexity. 
The $P$-convexity defined in this paper corresponds to the classical notions of closure from the Galois connections theory 
and also to the abstract convexity introduced by Moreau.  


\begin{defi}[$P$-convex hull]
The $P$-convex hull of an element $v\in\fr$ is the function $\clo{v}$ which is equal to $(v^\star)^{\dag}$. 
Similarly, the $P$-convex hull of a subset $X$ is the set $\clo{X}$ which is equal to $(X^\dag)^{\star}$.
\end{defi}

\begin{example}
\label{exampletri}
Let us consider the triangle, depicted in Figure~\ref{ex5}. Let us take $P_1=\{(x,y)\mapsto y+x,\ (x,y)
\mapsto x-y,\ (x,y)\mapsto -x\}$. Its $P_1$-convex hull is the one
depicted Figure~\ref{ex6}. If we take instead $P_2=\{(x,y)\mapsto y^2-x^2,\ (x,y)
\mapsto x,\ (x,y):\mapsto -x\}$, its $P_2$-convex hull is shown in Figure~\ref{ex7}.

\begin{figure}
\begin{minipage}{6cm}
\begin{center}
\resizebox{2cm}{!}{\input hulltrian1tex} 
\caption{$P_1$-convex hull of example \ref{exampletri}}
\label{ex6}
\end{center}
\end{minipage}
\begin{minipage}{6cm}
\begin{center}
\resizebox{2cm}{!}{\input hulltrian2tex} 
\end{center}
\caption{$P_2$-convex hull of example \ref{exampletri}.}
\label{ex7}
\end{minipage}
\end{figure}
\end{example}

\begin{defi}[$P$-convexity]
\label{abstractconvexity}
Let $v\in\fr$, we say that $v$ is a $P$-convex function if $v=\clo{v}$.
A set $X\subseteq\rd$ is a $P$-convex set if $X=\clo{X}$.
\end{defi}

\begin{example}
\label{trian}
Let us come back to the triangle depicted in Figure~\ref{ex5}. If $P$ is the set of
linear forms defined by the 
faces of this triangle
i.e. $P$ consists of the maps $(x,y):\mapsto y-x$, $(x,y):\mapsto y+x$ and $(x,y):\mapsto -y$,
then it is a $P$-convex set. 
But if $P$ is, for example, linear forms defined by orthogonal
vectors to the faces of the triangle, the previous triangle is no longer an $P$-convex set.  
\end{example}

Abstract (P-) convexity is a special instance of Galois connection. The latter
are classically used in abstract interpretation. In particular, the theory
of Galois connections yields the following result:
the $P$-convex hull of a function $v\in\fr$ is the greatest $P$-convex function which is smaller than $v$ and dually
the $P$-convex hull of a subset $X$ is the smallest $P$-convex set which is greater than $X$.  

We respectively denote by $\vep$ and $\ved$ the set of all $P$-convex function of $\fr$ and
the set of all $P$-convex sets of $\rd$.
\begin{defi}[The meet and join]
Let $v$ and $w$ be in $\fr$. We denote by $\inf(v,w)$
and $\sup(v,w)$ the functions defined respectively by:
\[ 
p\mapsto\inf(v(p),w(p))\text{ and } p\mapsto\sup(v(p),w(p))\enspace .
\]
We equip $\vep$ with the meet (respectively join) operator:
\begin{equation}
v\vee w=\sup(v,w)
\label{equnion}
\end{equation}
\begin{equation}
v\wedge w =(\inf(v,w)^{\star})^{\dag}
\label{eqinter}
\end{equation}
Similarly, we equip $\ved$ with the two following operators: 
$X\sqcup Y=((X\cup Y)^{\dag})^{\star}$, 
$X\sqcap Y =X\cap Y$.
\end{defi}

The family of functions $\vep$ is ordered by the partial order of real-valued
functions i.e. $v\leq w\iff v(p)\leq w(p)\ \forall p\in P$.
The family of set $\ved$ is ordered by the subset relation denoted by $\subseteq$.
The next theorem follows readily from the fact that the pair of functions $v\mapsto v^{\star}$ and $X\mapsto X^{\dag}$ defines a 
Galois connection, see e.g.~\cite[\S~7.27]{priestley}.
\begin{thm}
\label{lattice}
$(\vep,\wedge,\vee)$ and $(\ved,\sqcap,\sqcup)$ are isomorphic complete lattices. \hfill\qed
\end{thm}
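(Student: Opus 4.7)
The plan is to invoke the standard result about fixed points of a Galois connection (Priestley §7.27). First, I would observe that by Proposition \ref{Galois}, $((\cdot)^\dag, (\cdot)^\star)$ is a monotone Galois adjunction between the complete lattices $\pp(\rd)$ (under inclusion) and $\fr$ (under pointwise order). From this I would note that the composite $X \mapsto \clo{X} = (X^\dag)^\star$ is a closure operator on $\pp(\rd)$ and that $v \mapsto \clo{v} = (v^\star)^\dag$ is a kernel (co-closure) operator on $\fr$: extensivity $X \subseteq \clo{X}$ and co-extensivity $\clo{v} \leq v$ are already recorded in Remark \ref{galoisprop}, while monotonicity and idempotence follow immediately from the adjunction. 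By construction, $\ved$ and $\vep$ are precisely the corresponding fixed-point sets.

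Applying the cited theorem, I would conclude at once that both $\ved$ and $\vep$ are complete lattices and that $(\cdot)^\star$ restricts to an order isomorphism $\vep \to \ved$ with inverse $(\cdot)^\dag$. The mutual inversion is essentially free: for $v \in \vep$, the set $v^\star$ lies in $\ved$ since $((v^\star)^\dag)^\star = v^\star$ by Remark \ref{galoisprop}, and then $(v^\star)^\dag = v$ by the very definition of $\vep$; the dual argument handles $X \in \ved$.

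The only real work is to verify that the operations written in the statement coincide with the lattice operations transported from the ambient lattices. On the set side, since $\clo{\cdot}$ is a \emph{closure} operator, its fixed points are stable under arbitrary intersections, giving $X \sqcap Y = X \cap Y$, while joins must be re-closed, giving $X \sqcup Y = \clo{X \cup Y} = ((X \cup Y)^\dag)^\star$. Dually, on the function side, since $\clo{\cdot}$ is a \emph{kernel} operator, fixed points are stable under arbitrary pointwise suprema, yielding $v \vee w = \sup(v,w)$, while meets require applying the kernel, yielding $v \wedge w = \clo{\inf(v,w)} = (\inf(v,w)^\star)^\dag$. Both prescriptions agree with (\ref{equnion}), (\ref{eqinter}) and their counterparts for $\ved$.

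The single subtle point, and the only place I would slow down, is keeping straight the dual roles of $\clo{\cdot}$ on the two sides (closure on $\pp(\rd)$, kernel on $\fr$): this is precisely why meets are ``free'' on $\ved$ but demand a closure step on $\vep$, and symmetrically for joins. Beyond this bookkeeping, the theorem is a direct corollary of the standard Galois-connection machinery and no genuine obstacle arises.
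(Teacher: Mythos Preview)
Your proposal is correct and follows exactly the approach the paper indicates: the paper simply states that the theorem follows readily from the Galois connection of Proposition~\ref{Galois} and cites \cite[\S~7.27]{priestley}, which is precisely the standard result you invoke and unpack. Your discussion of closure versus kernel operators and the identification of the lattice operations with~\eqref{equnion}--\eqref{eqinter} is just a careful expansion of that one-line reference.
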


\subsection{Intervals, Zones, Octagons and Sankaranarayanan et al.'s linear templates}
The domain of intervals arises as a special domain of $P$-convex sets, in which the basis is $P=\{x_1,-x_1,\ldots,
x_n,-x_n\}$ where $x_i$ ($i=1,\ldots,n$) are the program variables. An abstract value $v$ in our domain encodes the supremum $v(x_i)$ and the infimum
$-v(-x_i)$ of an interval for the variable $x_i$.

Zones and octagons are treated in a similar manner. For instance, for zones, take $P=
\{x_i - x_j \mid i, j=0,\ldots,n, i\neq j\}$, adding a dummy variable $x_0$ (always equal to 0), as
customary, to subsume intervals. 
Of course, linear templates as defined in \cite{Sriram1} are particular $P$-convex sets, for which
$P$ is given by a finite set of linear functionals.

We remark that in the case of zones,
$v(x_i-x_j)$ is exactly the entry $i,j$ of the DBM
(Difference Bound Matrix) representing the corresponding zone. 
Also, elements of $\vep$ corresponding naturally to {\em closed} DBMs, that is, canonical forms of
DBMs. As is well known \cite{PhDMine}, the union of two zones preserves closure whereas the intersection
does not necessarily preserve closure. More generally, the set
of closed elements for a Galois connection is stable under the join
operation. This is reflected in our domain by~\eqref{equnion} 
and~\eqref{eqinter}.

\section{Quadratic templates}
\label{quadratic}
In this section, we instantiate the set $P$ to  
linear and quadratic functions. This allows us to give a systematic way
to derive the abstract semantics of a program. The main result is
that the abstract semantics for an assignment and for a test, can be safely over-approximated  
by Shor's relaxation scheme, Theorem~\ref{polytime}.
\begin{defi}
\label{quadfct}
We say that $P$ is a quadratic zone iff every element template $p\in P$ can be written as:
\[
x\mapsto p(x)=x^{T}A_p x+b_p^{T} x,
\]
where $A_p$ is a $d\times d$ symmetric matrix (in particular $A_p$ can be a zero matrix),
$x^{T}$ denotes the transpose of a vector $x$, $b_p$ is a $\rd$ vector.
\end{defi}

Now, we suppose that $P$ is \emph{finite}. We denote by $\frf$ the set of functions from $P$ to $\rr$, 
$\fri$ the set of functions from $P$ to $\rr\cup\{+\infty\}$ and $\frp$ the set of functions from $P$ 
to $\rr_+$.

\vskip .5cm

Suppose now we are given a program with $d$ variables $(x_1,\ldots,x_d)$ 
and $n$ control points numbered from $1$ to $n$.
We suppose this program is written in a simple toy version of a \texttt{C}-like imperative
language, comprising global variables, no procedures, assignments of variables using only
{\em parallel assignments}
$(x_1,\ldots,x_d)=T(x_1,\ldots,x_d)$, tests of the form
$(x_1,\ldots,x_d) \in C$, where $C$ is some shape in $\pp(\rr^d)$, and while loops 
with similar entry tests. We do not recapitulate the standard collecting semantics that
associates to this program a  
monotone map $F: \parvec \rightarrow \parvec$ 
whose least fixpoints $\lfp(F)$ has as $i$th component ($i=1,\ldots,n$) the
subset of $\rr^d$ of values that the $d$ variables $x_1, \ldots, x_d$ can take at control
point $i$. 

The aim of this section is to compute, inductively on the syntax, the
abstraction (or a good over-approximation of it) $\sha{F}$
of $F$ from $\fr^n$ to itself defined as usual as:
\begin{equation}
\label{absmap}
\sha{F}(v)=(F(v^\star)^\dagger) 
\end{equation}

The notation $v^\star$ is in fact the vector of sets $(v_1^\star,\cdots,v_n^\star)$ and 
$(F(v^\star)^\dagger)$ is also interpreted component-wise. The notation $\clo{v}$ will 
be also understood component-wise.

\subsection{Shor's semi-definite relaxation scheme}
\label{shorscheme}

Finding the maximal value of a non-concave quadratic function under convex or non-convex 
quadratic constraints is known to be an \emph{NP-Hard} problem,
see~\cite{Vavasis90} for a discussion of complexity issues in quadratic programming.     
Shor's relaxation scheme (see ~\cite[Section 4.3.1]{BTN} or Shor's original 
article~\cite{shor} for details) consists of over-approximating the value of a general
quadratic optimization problem by the optimal value of a semi-definite programming (SDP for short) problem, the latter being computationally more tractable.

Indeed, SDP problems
can be solved in polynomial time to an arbitrary prescribed precision by
the ellipsoid method~\cite{GroetschelLovaszSchrijver1988}.
More precisely, let $\epsilon>0$ be a given rational, suppose that the 
input data of a semi-definite program are rational and suppose that an integer
$N$ is known, such that the feasible set lies inside the ball of the radius $N$ around zero.
Then an $\epsilon$-optimal solution 
(i.e., a feasible solution the value of which is at most at a distance $\epsilon$ from the optimal value) can be found in a time that is polynomial in the number of bits of the input data and in $-\log\epsilon$. Moreover, 
an $\epsilon$-solution of an SDP problem can also be found in polynomial time
by interior point methods~\cite{NestNemi} if a strictly feasible solution is available.  However, when the input is rational, no size on the bit lengths of the intermediate data is currently known, so that the term ``polynomial time'' 
for interior point methods is only understood in the model of computation over real numbers (rather
than in bit model~\cite{GareyJohnson}). The advantage of interior methods is that they are very efficient in practice. We refer the reader to~\cite{RaP:97} for more information.



Let $f$, $\{f_i\}_{i=1,\ldots,m}$ be quadratic functions on $\rd$.
Let us consider the following constrained maximization problem:
\begin{equation}
\label{quadpb}
\sup\{f(x)\mid f_i(x)\leq 0,\,\forall i=1,\ldots,m\}
\end{equation}

In constrained optimization, it is classical to construct another constrained optimization problem 
from the initial one in order to solve an easier problem. A technique called Lagrange duality (for details see for example 
~\cite[Section 5.3]{AuTe}) consists in adding to the objective function the inner product of the vector of constraints with 
a positive vector of the euclidean space whose the dimension is the number of constraints. 
In our context, the value of \eqref{quadpb} is given by the following sup-inf (primal) value \eqref{supinfshor}: 
\begin{equation}
\label{supinfshor}
\sup_{x\in\rd}\inf_{\lambda\in\rr_+^m} f(x)-\sum_{i=1}^m \lambda_i f_i(x)\enspace .
\end{equation}
 
A simple result of constrained optimization called weak duality theorem ensures that if we commute the $\inf$ and 
the $\sup$ in the formula \eqref{supinfshor}, the result is greater than the value \eqref{supinfshor}.
The commutation of the $\inf$ and the $\sup$ gives us the so called (dual) value: 
\begin{equation}
\label{infsupshor}
\inf_{\lambda\in\rr_+^m}\sup_{x\in\rd} f(x)-\sum_{i=1}^m \lambda_i f_i(x)\enspace .
\end{equation}

The vectors $\lambda\in\rr_+^m$ are called vectors of Lagrange multipliers. The function 
$\lambda\mapsto \sup_{x\in\rd} f(x)-\sum_{i=1}^m \lambda_i f_i(x)$ is always convex and lower semi-continuous 
(these definitions are recalled in the appendix), so it has good properties to minimize it. If the function $f$ 
is concave, the functions $f_i$ are convex and if the Slater constraint qualification (i.e. there exists 
$x\in\rd$ such that $f_i(x)<0$ for all $i=1,\ldots,m$) holds then \eqref{supinfshor} and \eqref{infsupshor} 
coincide (this will be used  Proposition~\ref{sdualth}).

Shor's relaxation scheme consists in computing the value \eqref{infsupshor} by solving a semi-definite program.
We introduce the matrix $M(g)$, for a quadratic function $g$ written as $x^TA_gx+b_g^Tx+c_g$ and the matrix $N(y)$ for a real 
$y$ defined as:
\begin{equation}
\label{shmatrix}
M(g)=\begin{pmatrix} c_g\, & \frac{1}{2}b_g^T \\ \frac{1}{2}b_g\, & A_g \end{pmatrix},\ \mathrm{and}\  
N_{1,1}(y)=y,\ N_{i,j}(y)=0\ \mathrm{if}\ (i,j)\neq (1,1)
\end{equation}

Let $\preceq$ denote the L\"owner ordering of symmetric matrices, so that $A\preceq B$ iff
all eigenvalues of $B-A$ are non-negative. 

When we fix $\lambda\in\rr_+^m$, we have to maximize an unconstrained quadratic problem
and the the maximum is finite iff there exists $\eta\in\rr$ such that, for all $x\in\rd$,
$f(x)-\sum_{i=1}^m \lambda_i f_i(x)\leq \eta$ and since $f$, $f_i$ are quadratic functions,
this is equivalent to 
$x^T(A_f-\sum_{i=1}^m \lambda_i A_{f_i})x+(b_f-\sum_{i=1}^m \lambda_i b_{f_i}+
c_{f}-\sum_{i=1}^m \lambda_i c_{f_i}-\eta\leq 0$ for all $x\in\rd$ which is equivalent to the 
fact that the matrix $M(f)+\eta N(-1)-\sum_{i=1}^m \lambda_i M(f_i)$ is negative semi-definite.
Consequently, taking the infimum over $\lambda\in\rr_+^m$, we recover the value \eqref{infsupshor}.  

In conclusion, Shor's relaxation scheme consists in solving the following SDP problem:
\begin{equation}
\label{shordual}
\Min_{\substack{\lambda\in\rr_+^m\\ \ \eta\in\rr}} \eta\mbox{ s.t. }\, M(f)+\eta N(-1)-\sum_{i=1}^m \lambda_i M(f_i)]
\preceq 0
\end{equation}
which is equal to the value \eqref{infsupshor}, hence an over-approximation of the optimal value
of the problem~\eqref{quadpb}. We can use a verified SDP solver as VSDP~\cite{jan-cha-kei-07} to solve a SDP problem.

\begin{example}
\label{rotation}
We consider the rotation of angle $\phi\in (0,\pi)$ leaving invariant the unit circle $S^1=\{(x,y)\in \rr^2\mid x^2+y^2=1\}$.
We write $T$ the linear map associated to the rotation matrix $A$: 
\[
T
\begin{pmatrix}
 x\\
 y
\end{pmatrix}
=A (x,y)^T
=
\begin{pmatrix}
\cos\phi& &-\sin\phi\\
\sin\phi& &\cos\phi
\end{pmatrix}
\begin{pmatrix}
x\\
y
\end{pmatrix}
\]
where $x^2+y^2=1$. To show that the proposed relaxation is accurate we next observe that it preserves the fact that the 
unit sphere is invariant by rotation i.e. $T(S^1)=S^1$. The inclusions $S^1\subseteq T(S^1)$ and  $T(S^1)\subseteq S^1$ 
can be proved by the same manner because $A$ is an orthogonal matrix i.e. 
$A^T A=A A^T=Id$: $((x,y)\in S^1 \implies (x,y)\in T(S^1))$ $\iff$ ($A^T(x,y)\in S^1$ for $(x,y)\in S^1$), 
hence, to show $S^1\subseteq T(S^1)$ is reduced to show $T^*(S^1)\subseteq S^1$ where $T^*$ is the linear map 
associated to $A^T$. We only prove that $T(S^1)\subseteq S^1$ and we use the Shor's relaxation scheme to prove that.

We introduce the set of quadratic functions $P=\{p_1(x,y)\mapsto x^2+y^2,\; p_2(x,y)\mapsto -(x^2+y^2)\}$ 
and we set $v_1(p_1)=1$ and $v_1(p_2)=-1$. The unit sphere is the $P$-sub-level set of $v_1$: 
$\{(x,y)\in \rr^2\mid p(x,y)\leq v_1(p),\ \forall p\in P\}$. To show $T(S^1)\subseteq S^1$, it suffices to show 
$T(v_1^\star)^\dag=v_1$. We write:
\begin{eqnarray*}
v_2(p_1):=T(v_1^\star)^\dagger(p_1)=\sup\{p_1(T(x,y))\mid p_1(x,y)\leq 1,\; p_2(x,y)\leq -1\}\\
v_2(p_2):=T(v_1^\star)^\dagger(p_2)=\sup\{p_2(T(x,y))\mid p_1(x,y)\leq 1,\; p_2(x,y)\leq -1\}
\end{eqnarray*}
For the calculus of $v_2(p_1)$, the matrices defined in \ref{shmatrix} are:
\[
M(p_1\circ T)=\begin{pmatrix} 0 & 0 & 0 \\ 0 & 1 & 0\\ 0 & 0 & 1\end{pmatrix},
\ M(p_2\circ T)=-M(p_1\circ T),
\]
\[
M(p_1)=\begin{pmatrix} 0 & 0 & 0 \\ 0 & 1 & 0\\ 0 & 0 & 1\end{pmatrix}\text{ and }
M(p_2)=-M(p_1)\enspace .
\]
By using the Shor's relaxation scheme, the SDP problem~\eqref{shordual} gives the following equalities:
\[
v_2(p_1)=
\Min_{\substack{\lambda(p_1)\geq 0\\ \lambda(p_2)\geq 0\\ \eta\in\rr}} \eta\mbox{ s.t. }\, 
M(p_1\circ T)+\eta N(-1)+\lambda(p_1) (N(1)-M(p_1))+\lambda(p_2) (N(-1)-M(p_2))\preceq 0
\]
and
\[
v_2(p_2)=
\Min_{\substack{\lambda(p_1)\geq 0\\ \lambda(p_2)\geq 0\\ \eta\in\rr}} \eta\mbox{ s.t. }\, 
M(p_2\circ T)+\eta N(-1)+\lambda(p_1) (N(1)-M(p_1))+\lambda(p_2) (N(-1)-M(p_2))\preceq 0
\]
We can rewrite the two previous SDP problems as follows:
\[
\noindent v_2(p_1)=\Min_{\substack{\lambda(p_1)\geq 0\\ \lambda(p_2)\geq 0\\ \eta\in\rr}}\eta\ \mathrm{s.t.}
\begin{pmatrix}-\eta+\lambda(p_1)-\lambda(p_2)&0&0\\0&1-\lambda(p_1)+\lambda(p_2)&0\\
0&0&1-\lambda(p_1)+\lambda(p_2)\end{pmatrix}\preceq 0
\]                                                              
and
\[
v_2(p_2)=\Min_{\substack{\lambda(p_1)\geq 0\\ \lambda(p_2)\geq 0\\ \eta\in\rr}}\eta\ \mathrm{s.t.}
\begin{pmatrix}-\eta+\lambda(p_1)-\lambda(p_2)&0&0\\0&-1-\lambda(p_1)+\lambda(p_2)&0\\
0&0&-1-\lambda(p_1)+\lambda(p_2)\end{pmatrix}\preceq 0
\]
To solve these optimization problems, we could call an SDP solver, but in this case,
it suffices to solve a system of inequalities:
all the diagonal elements must be non-positive, for example, for the first problem, this
implies that $\eta\geq 1$ and since we minimize $\eta$ we get $\eta=1$.
Hence, we find $v_2(p_1)=1$ and $v_2(p_2)=-1$ and finally we have proved $T(v_1^\star)^\dag=v_1$ and we conclude that
the rotation invariance of the unit sphere $S^1=\{(x,y)\in \rr^2\mid x^2+y^2=1\}$ is preserved for the relaxation. 
\end{example}

\subsection{Abstraction of assignments and tests using Shor's relaxation}
\label{assigntestsect}
\subsubsection{Abstraction of assignments}
\label{assignsect}
We focus on assignments $(x_1,\ldots,x_d)=T(x_1,\ldots,x_d)$ at control point $i$ such that $p\circ T$ is a quadratic 
function for every $p\in P$. Equation \eqref{absmap} translates in that case to (given that $v_{i-1}$ defines the abstract
value at control point $i-1$, i.e. immediately before the assignment):
\newcommand{\mybrackets}[1]{\big(#1\big)}
\begin{equation}
\label{optpb}
\mybrackets{\sha{F_i}(v)}(p)=\sup\{p\circ T(x)\mid q(x)-v_{i-1}(q)\leq 0, \forall q\in P\} \enspace .
\end{equation}
We recognize the constrained optimization problem~\eqref{quadpb} and we use Lagrange duality as in the first step
of Subsection~\ref{shorscheme}. In our case, the Lagrange
multipliers are some non-negative functions $\lambda$ from $P$ to $\rr$. 
We thus consider the function which we will call the \emph{relaxed} function:
\begin{equation}
\label{eqn}
\mybrackets{\rel{F_i}(v)}(p):=\inf_{\lambda\in \frp}\sup_{x\in\rd} p\circ T(x)+\sum_{q\in P}\lambda(q)\mybrackets{v_{i-1}(q)-q(x)} \enspace .
\end{equation}
To compute $\rel{F_i}$, we apply Shor's relaxation scheme of Subsection~\ref{shorscheme} and particularly the reformulation
as the SDP problem \eqref{shordual}, 
we get:
\begin{equation}
\label{SDP2}
\mybrackets{\rel{F_i}(v)}(p)=\Min_{\substack{\lambda\in\frp\\ \ \eta\in\rr}} 
\eta\, \mbox{ s.t. }\, M(p\circ T)+\eta N(-1)+\sum_{q\in P} \lambda(q) \mybrackets{N(v_{i-1}(q))-M(q)}
\preceq 0  
\end{equation}
where $M(p\circ T)$, $N(-1)$ and $M(q)$ are the matrices defined in~\eqref{shmatrix}.

We can treat the case where the map $T$ has non-linear quadratic components. 
The condition $p\circ T$ is quadratic for all $p\in P$ implies, in this case, that the quadratic 
templates $p$ should be linear forms which is equivalent to the fact that the matrices $M(p)$ 
have the following form:
\begin{equation}
\label{noquadcont}
\begin{pmatrix} 0\, & \frac{1}{2}b_p^T \\ \frac{1}{2}b_p\, & 0 \end{pmatrix}
\end{equation}
and then the matrix $\eta N(-1)+\sum_{q\in P} \lambda(q) \mybrackets{N(v_{i-1}(q))-M(q)}$ in Equation~\eqref{SDP2}
has no "quadratic counterpart" (of the form of Equation~\eqref{noquadcont}, we  conclude that the SDP 
minimization problem is feasible in this case iff $M(p\circ T)$ is negative semi-definite. 
Finally, the problem of Equation~\eqref{optpb} is concave (since the forms $q\in P$ are linear) 
and consequently the Shor's relaxation scheme computes exactly the abstract semantic functional 
of Equation~\eqref{optpb}. 

\begin{figure}[ht!]
\begin{minipage}{9cm}
\begin{center}
\begin{lstlisting}[frame=single]
x  = [0,10];
y  =      1; [1]  
xn = -3*x*x-y*y;
yn = -y*y+x*x; 
x  = xn;
y  = yn [2]}
\end{lstlisting}
\end{center}
\end{minipage}
\caption{A simple program written in polynomial arithmetic}
\label{NonLinearEx}
\end{figure}
\begin{example}[Non linear assignment]
\label{exampleNLprog}
Let us take a simple program written in a polynomial arithmetic
(assignments now involve polynomial expressions in the variables,
rather than linear ones as in the previous example).
This program is described at Figure \ref{NonLinearEx}.
We consider the set of templates $P=\{p_1,\ p_2\}$, with $p_1:(x,y)\mapsto x+y$ and $p_2:(x,y)\mapsto x-y$. 
We define the function $T$ as follows:
\[  
(x,y)\mapsto T(x,y)=
\left(
\begin{array}{c}
-3x^2-y^2\\
-y^2+x^2
\end{array}
\right)
\]
which is the non-linear assignment of the program described at Figure \ref{NonLinearEx}.
By Equation ~\eqref{absmap}, the abstract semantics are, for $w\in\fr^2$: 
\begin{align*}
\mybrackets{\sha{F_1}(w)}(p)
&=(11,9)\\
\mybrackets{\sha{F_2}(w)}(p)
&=\displaystyle{\sup_{(x,y)\in(w_1)^{\star}}} (p\circ T)(x,y)
\end{align*}

About the first component of the abstract semantics, the vector $(11,9)$ means that 
$\mybrackets{\sha{F_1}(w)}(p_1)=11$ and $\mybrackets{\sha{F_1}(w)}(p_2)=9$. This 
corresponds to the relations $x+y\leq 11$ and $x-y\leq 9$ at control point [1].
At control point [2], the relation $xn+yn=-2x^2-2y^2$ holds between the variables at control point [1] and [2]. In this special case, the assignments are quadratic and the templates are linear, so the evaluation of the abstract semantic 
functional still reduces to a (non-convex) quadratic programming problem, which,
by application of Shor relaxation, leads to
the following relaxed functional $\mybrackets{\rel{F_2}(w)}(p_1)$ at control point 2, for an element $w\in\fr^2$ and  for the template $p_1$:
\[
\mybrackets{\rel{F_2}(w)}(p_1)=
\inf_{\substack{\lambda(q)\geq 0\\ \forall\, q\in P\geq 0}}\sup_{(x,y)\in\rr^2}\sum_{q\in P} \lambda(q)w_1(q)
+(x,y)\begin{pmatrix} -2&0\\ 0&-2\end{pmatrix}(x,y)^T-\sum_{q\in P}\lambda(q) q(x,y)
\]
By introducing the matrices
\[
M(p_1)=\begin{pmatrix}
0 & 0.5 & 0.5\\
0.5 & 0 & 0\\
0.5 & 0 & 0
\end{pmatrix},\qquad
M(p_2)=\begin{pmatrix}
0 & 0.5 & -0.5\\
0.5 & 0 & 0\\
-0.5 & 0 & 0
\end{pmatrix},
\]
\[
M(p_1\circ T)=\begin{pmatrix}
0 & 0 & 0\\
0 & -2 & 0\\
0 & 0 & -2
\end{pmatrix}
\]
and using Equation \eqref{SDP2}, we get
\[
\mybrackets{\rel{F_2}(w)}(p_1)=\Min_{\substack{\lambda(p)\geq 0\\ \forall p\in P \\ \eta\in\rr}} 
\eta\, \mbox{ s.t. }\, M(p_1\circ T)+\eta N(-1)+\sum_{q\in P} \lambda(q) \mybrackets{N(w_{1}(q))-M(q)}
\preceq 0   \enspace .
\]
Finally, using \texttt{Matlab}\footnote{Matlab is a registered trademark of the MathWorks,Inc.}, 
\texttt{Yalmip} ~\cite{Yalm} and \texttt{SeDuMi}~\cite{Sedu}, we find 
$\mybrackets{\rel{F_2}(w)}(p_1)=-3.2018e-09\simeq 0$ which means, at control point [2], that $x\leq -y$ and the image of the
second template by the relaxed function is $\mybrackets{\rel{F_2}(w)}(p_2)=-1.0398e-09\simeq 0$ which means, 
at control point [2], $x\leq y$. 
The invariant found $\{(x,y)\mid x\leq -y, x\leq y\}$ is an unbounded set. We can refine this set from 
the invariant found by interval arithmetic to get a bounded set. 
\end{example}



\subsubsection{Abstraction of simple tests}
\label{testsect}
Now, we focus on a simple test and we write $j$ the control point of the test. We assume here that a test is 
written as $r(x_1,\ldots,x_d) \leq 0$  where $r$ is a quadratic function. We assume that the operation for the 
``\texttt{then}'' branch has the form $x=T_{then}(x)$ and the operation for the ``\texttt{else}'' branch has the form 
$x=T_{else}(x)$ where $T_{then}$, $T_{else}$ such that $p\circ T_{then}$ and $p\circ T_{else}$ are quadratic for all $p\in P$.
To enter into the ``\texttt{then}'' branch, the abstract values of control point $j-1$ must satisfy the test condition of the 
``\texttt{then}'' branch, so the abstraction of a test is $F_{j}(X)=T_{then}(X_{j-1})\cap \{x\in \rd\mid r(x)\leq 0\}$ for the 
``\texttt{then}'' branch. 
For the ``\texttt{else}'' branch, we have similarly $F_{j+2}(X)=T_{else}(X_{j-1}) \cap \{x\in\rd\mid r(x)>0\}$ which 
is equivalent to $F_{j+2}(X)=T_{else}(X_{j-1}) \cap \{x\in\rd\mid -r(x)<0\}$. However, we cannot use the Shor's 
relaxation scheme with strict inequalities and so we replace the set $\{x\in\rd\mid -r(x)<0\}$ by the set 
$\{x\in\rd\mid -r(x)\leq 0\}$ which is larger so we compute at least a safe over-approximation. When the function $r$ 
is concave and the set $\{x\in\rd\mid -r(x)<0\}$ is non-empty, the closure of the former set coincides with the latter set. 
For the ``\texttt{else}'' branch, the abstraction is, finally, $F_{j+2}(X)=T_{else}(X_{j-1}) \cap \{x\in\rd\mid -r(x)\leq 0\}$.
As we deal with arbitrary quadratic functions $r$, it is sufficient to show here how to deal with the equations at 
control point $j$ and we simply write $T$ instead of $T_{then}$.  
By using Equation~\eqref{absmap}, we get, for $v\in\fr^n$, and $p\in P$:
\[
\mybrackets{\sha{F_{j}}(v)}(p)=\left(T\left(v_{j-1}^\star \cap \{x\in\rd\mid r(x)\leq 0\}\right)\right)^\dagger(p)
\]
then, by a simple calculus:
\begin{equation}
\label{eqnabstracttest}
\mybrackets{\sha{F_{j}}(v)}(p)=\sup\{p\circ T(x)\mid q(x)\leq v_{j-1}(q)\ \forall\, q\in P,\ r(x)\leq 0\}.
\end{equation}
Using again the Lagrange duality, we get the relaxed problem:
\begin{equation}
\label{eqntest}
\mybrackets{\rel{F_{j}}(v)}(p):=\inf_{\substack{\lambda\in \frp\\ \mu\in \rr_+}}\sup_{x\in\rd}\sum_{q\in P}\lambda(q)v_{j-1}(q)
+ p\circ T(x)-\sum_{q\in P}\lambda(q)q(x)-\mu r(x).
\end{equation}
Using the Shor's relaxation scheme described in Subsection~\ref{shorscheme} and in particular the SDP problem \eqref{shordual},
we can rewrite $\mybrackets{\rel{F_j}(v)}(p)$ as the following SDP problem:
\begin{equation}
\label{SDPinter}
\Min_{\substack{\lambda\in\frp\\ \mu\in\rr_+\\ \eta\in\rr}} 
\eta\, \mbox{ s.t. }\, M(p\circ T)+\eta N(-1)+\sum_{q\in P} \lambda(q) \mybrackets{N(v_{i-1}(q))-M(q)}-\mu M(r)\preceq 0
\end{equation}
\noindent where $M(p\circ T)$, $N(-1)$, $M(q)$ and $M(r)$ are the matrices defined in~\eqref{shmatrix}.
\begin{figure}[ht!]
\begin{minipage}{9cm}
\begin{center}
\begin{lstlisting}[frame=single]
x = [0,10];
y =      1;
u =      0; [1] 
if (y*y+x*x-2<=0){
    u=x;
    x = 3-y*y;
    y = u-1;[2]
    }
else{ 
     x=1-y*y;
     y=3-y;[3]
    }
\end{lstlisting}
\end{center}
\end{minipage}
\caption{A simple program written in polynomial arithmetic}
\label{NonLinearExtest}
\end{figure}
\begin{example}[Quadratic assignment and quadratic tests]
Let us take the simple program written in a polynomial arithmetic described at Figure \ref{NonLinearExtest}. 
We introduce the quadratic function $r:(x,y)\mapsto y^2+x^2-2$ which represents the test. We define the function $T_{then}$ as 
follows:
\[  
(x,y)\mapsto T_{then}(x,y)=
\left(
\begin{array}{c}
3-y^2\\
x-1
\end{array}
\right)\enspace .
\]
We also define the function $T_{else}$ by:
\[  
(x,y)\mapsto T_{else}(x,y)=
\left(
\begin{array}{c}
1-y^2\\
3-y
\end{array}
\right)\enspace .
\]
These two functions represent the assignments of the branches \texttt{then} and \texttt{else} respectively.
Similarly to Example ~\ref{exampleNLprog}, we consider the domain of intervals: the set of linear 
templates $P=\{\x,-\x,\y,-\y\}$ where $\x:(x,y)\mapsto x$, $\y:(x,y)\mapsto y$. 
The abstract semantics defined by Equation \eqref{absmap} are, for $w\in\fr^3$:
\begin{align*}
\mybrackets{\sha{F_1}(w)}(p)&=\{\x(x,y)\leq 10,\, -\x(x,y)\leq 0,
\y(x,y)\leq 1,\, -\y(x,y)\leq -1\}^{\dagger}\\
\mybrackets{\sha{F_2}(w)}(p)&=\displaystyle{\sup_{\substack{(x,y)\in w_1^\star\\ r(x,y)\leq 0}}} p(T_{then}(x,y))\\
\mybrackets{\sha{F_3}(w)}(p)&=\displaystyle{\sup_{\substack{(x,y)\in w_1^\star\\ -r(x,y)\leq 0}}} p(T_{else}(x,y))
\end{align*}
At the first control point, for an element $w\in\fr^3$, the abstract semantics functional 
$\mybrackets{\sha{F_1}(w)}$ takes the value 10 for the template $\x$, 0 for the template $-\x$, 
1 for the template $\y$ and -1 for the template $-\y$. Again, the assignments are quadratic and 
the templates are linear, so the evaluation of the abstract semantic 
functional still reduces to a quadratic programming problem, which,
by application of Shor relaxation, leads to the following relaxed functional $\mybrackets{\rel{F_2}(w)}(\x)$ 
at control point 2, for an element $w\in\fr^3$: 
\[
\mybrackets{\rel{F_2}(w)}(\x)=\inf_{\substack{\lambda(p)\geq 0\\ \forall\, p\in P\\ \mu\in\rr_+}}\sup_{z\in\rr^2} 
\sum_{q\in P}\lambda(q)w_1(q)+z^{T}\begin{pmatrix} 0&0\\ 0&-1\end{pmatrix}
z-\sum_{q\in P}\lambda(q) q(z)+3-\mu r(z)
\]
By introducing the matrices
\[
M(\x)=\begin{pmatrix}
0 & 0.5 & 0.5\\
0.5 & 0 & 0\\
0.5 & 0 & 0
\end{pmatrix},\ 
M(\y)=\begin{pmatrix}
0 & 0.5 & -0.5\\
0.5 & 0 & 0\\
-0.5 & 0 & 0
\end{pmatrix},
\]
\[
M(-\x)=-M(\x)\text{ and } M(-\y)=-M(\y)\enspace ,
\]
\[
M(r)=\begin{pmatrix}
-2 & 0 & 0\\
0 & 1 & 0\\
0 & 0 & 1
\end{pmatrix},\qquad 
M(\x\circ T_{then})=\begin{pmatrix}
3 & 0 & 0\\
0 & 0 & 0\\
0 & 0 & -1
\end{pmatrix}
\]
and using Equation \eqref{SDP2}, we get
\[
\mybrackets{\rel{F_2}(w)}(\x)=\Min_{\substack{\lambda(p)\geq 0\\ \forall\, p\in P\\ \mu\in\rr_+ \\ \eta\in\rr}} 
\eta\, \mbox{ s.t. }\, M(\x\circ T_{then})+\eta N(-1)-\mu M(r)+\sum_{q\in P} \lambda(q) \mybrackets{N(w_{1}(q))-M(q)}
\preceq 0  
\]
We find using again Yalmip, SeDuMi and Matlab, $\mybrackets{\rel{F_2}(w)}(\x)=2$. For the lower bound of the values of "$3-y^2$",  
we find $\mybrackets{\rel{F_2}(w)}(-\x)=-1$. For the variable $y$ at control point [2], we find 
$\mybrackets{\rel{F_2}(w)}(\y)=-3.4698e-09\simeq 0$ 
and $\mybrackets{\rel{F_2}(w)}(-\y)=1$. We remark that $\mybrackets{\rel{F_2}(w)}(\y)\simeq 0$ which means that at control point
[2], the values of "$x-1$" are less or equal to 0. Indeed, Shor's relaxation scheme detects that the test is satisfied iff 
$1+x^2-2\leq 0$ which is equivalent to $x^2\leq 1$ and then the values of "$x-1$" are bounded by 0.
\end{example}




\subsubsection{Properties of the relaxed semantics}
We denote by $\affect$ the set of coordinates of the abstract semantics functional which interpret an assignment. We denote by $\inter$ the set of coordinates corresponding to tests (the symbol $\inter$ stands for ``intersection''). 
The set of coordinates $\union$ (for ``union'') corresponding to loops 
will be dealt with separately in Subsection~\ref{subsec-loop}.

Now, we are interested in the properties of the relaxed semantics which we introduced 
in the Subsection \ref{assigntestsect}. First, we start by proving that the relaxed semantics is a safe over-approximation
of the abstract semantics. It is deduced by the weak duality theorem: the relaxed semantics is a relaxation.

\begin{thm}
\label{safeapprox}
Let $i$ be a coordinate in $\affect\cup\inter$. For all $v\in\fr^n$,
\[
 \sha{F_i}(v)\leq \rel{F_i}(v)
\]
\end{thm}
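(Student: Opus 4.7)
The plan is to establish the inequality componentwise (for each $p\in P$) and for each coordinate $i$, treating the two cases $i\in\affect$ and $i\in\inter$ separately but by essentially the same weak-duality argument. In both cases the relaxed functional is obtained from the exact one by commuting a $\sup$ and an $\inf$, and the claim is just the standard primal$\leq$dual inequality.

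First I would fix $i\in\affect$, $v\in\fr^n$, and $p\in P$, and unfold the definitions. By \eqref{optpb} we have
\[
\mybrackets{\sha{F_i}(v)}(p)=\sup\{p\circ T(x)\mid q(x)\leq v_{i-1}(q),\,\forall q\in P\}.
\]
The key observation is that for any $\lambda\in\frp$ and any $x\in\rd$ satisfying the constraints $q(x)\leq v_{i-1}(q)$ for all $q\in P$, every term $\lambda(q)(v_{i-1}(q)-q(x))$ is non-negative, so
\[
p\circ T(x)\leq p\circ T(x)+\sum_{q\in P}\lambda(q)\mybrackets{v_{i-1}(q)-q(x)}\leq \sup_{y\in\rd}\Bigl(p\circ T(y)+\sum_{q\in P}\lambda(q)\mybrackets{v_{i-1}(q)-q(y)}\Bigr).
\]
Taking the supremum of the left-hand side over admissible $x$ and then the infimum of the right-hand side over $\lambda\in\frp$ yields, by the definition \eqref{eqn} of $\rel{F_i}$, the inequality $\mybrackets{\sha{F_i}(v)}(p)\leq\mybrackets{\rel{F_i}(v)}(p)$.

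For $i\in\inter$, the exact functional is given by \eqref{eqnabstracttest}, which adds the quadratic constraint $r(x)\leq 0$. The very same argument goes through: for any admissible $x$, any $\lambda\in\frp$, and any $\mu\in\rr_+$, one has $\sum_{q\in P}\lambda(q)(v_{j-1}(q)-q(x))\geq 0$ and $-\mu r(x)\geq 0$, so
\[
p\circ T(x)\leq \sup_{y\in\rd}\Bigl(p\circ T(y)+\sum_{q\in P}\lambda(q)\mybrackets{v_{j-1}(q)-q(y)}-\mu r(y)\Bigr),
\]
and taking $\sup$ over admissible $x$ followed by $\inf$ over $(\lambda,\mu)$ gives the bound against \eqref{eqntest}.

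Finally, since the inequality holds pointwise for every $p\in P$ and every coordinate in $\affect\cup\inter$, it holds in the product order on $\fr^n$, which is exactly the statement. The only mildly delicate point is to remember that the suprema/infima may take values in $\br$, so the argument must be carried out in the extended real line; but monotonicity and the sign conventions above make this routine, and there is no real obstacle — the result is a direct application of the weak duality theorem recalled in Subsection~\ref{shorscheme}.
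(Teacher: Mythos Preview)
Your proof is correct and follows essentially the same weak-duality argument as the paper's own proof: add the non-negative Lagrange terms to $p\circ T(x)$ for feasible $x$, pass to the unconstrained supremum, then take the infimum over multipliers. The paper only writes out the case $i\in\affect$ and explicitly singles out the empty feasible set (where $\mybrackets{\sha{F_i}(v)}(p)=-\infty$), whereas you cover both cases and absorb the empty-set issue into your remark about working in $\br$; these are purely presentational differences.
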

The proof is given for the convenience of the reader. We only 
consider the case when $i\in\affect$ (the proof can be easily adapted to
the case of tests, i.e., $i\in\inter$).
\begin{proof}
Let $v\in\fr^n$ and $p\in P$. If $v_{i-1}^\star$ is empty (this case includes $v_{i-1}(p)=-\infty$ for some $p\in P$), 
$\mybrackets{\sha{F_i}(v)}(p)=-\infty$ for all $p\in P$ and the inequality holds. Now we suppose $v_{i-1}^\star\neq \emptyset$ and we 
take $x\in v_{i-1}^\star$. Since $\lambda\in \frp$, $\sum_{q\in P}\lambda(q)\mybrackets{v_{i-1}(q)-q(x)}\geq 0$ and then:
\[ 
p\circ T(x)\leq p\circ T(x)+\sum_{q\in P}\lambda(q)\mybrackets{v_{i-1}(q)-q(x)}\enspace .
\]
We get: 
\[
\sup\{p\circ T(x)\mid q(x)\leq v_{i-1}(q), \forall q\in P\}
\leq\sup_{x\in\rd} p\circ T(x)+\sum_{q\in P}\lambda(q)\mybrackets{v_{i-1}(q)-q(x)}\enspace .
\]
The left-hand side does not depend on $\lambda$ so we have $\mybrackets{\sha{F_i}(v)}(p)\leq \mybrackets{\rel{F_i}(v)}(p)$.
\end{proof}

We observe that the calculus of the $P$-convex hull is a special case of assignment. Indeed, to compute a $P$-convex hull,
is equivalent to compute the abstract effect of the assignment $x=x$. Consequently, we can apply the Shor's relaxation scheme 
to over-approximate the computation of the $P$-convex hull when the templates are quadratic.

\begin{cor}
\label{closurecal}
Let $w$ be in $\fr$ and $p$ in $P$ we have:
\[
\begin{array}{lcl}
\mybrackets{\clo{w}}(p) &\leq & \displaystyle{\inf_{\substack{\lambda\in\frp\\ \ \eta\in\rr}} 
\eta\, \mbox{ s.t. }\, M(p)+\eta N(-1)+\sum_{q\in P} \lambda(q) \mybrackets{N(w(q))-M(q)}
\preceq 0}\\
&\leq & w(p)\enspace .
\end{array}
\]
\end{cor}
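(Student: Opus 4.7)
The plan is to observe that the $P$-convex hull computation is exactly the special case of the assignment abstraction studied in Subsection~\ref{assignsect} where the transformation is the identity map $T = \Id$. By Definition of the $P$-convex hull and of $v^\star, X^\dagger$, we have
\[
\clo{w}(p) = (w^\star)^\dagger(p) = \sup\{p(x)\mid q(x)\leq w(q),\ \forall q\in P\},
\]
which is exactly the right-hand side of \eqref{optpb} with $T = \Id$, so that $p\circ T = p$. Hence the first inequality follows immediately from Theorem~\ref{safeapprox} applied to the identity assignment, which upon inspection of the SDP formulation~\eqref{SDP2} yields precisely the infimum in the statement (with $M(p\circ T) = M(p)$).

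For the second inequality, I would exhibit a specific feasible point of the SDP whose objective value is $w(p)$. The natural candidate is to set $\eta = w(p)$, $\lambda(p) = 1$, and $\lambda(q) = 0$ for $q\neq p$. Since $N$ is linear in its scalar argument (and $N(y)$ is just the matrix with $y$ in position $(1,1)$ and zeros elsewhere), the constraint matrix reduces to
\[
M(p) + w(p)N(-1) + N(w(p)) - M(p) = N(-w(p)) + N(w(p)) = 0,
\]
which is trivially $\preceq 0$. Thus $\eta = w(p)$ is feasible, proving that the infimum is bounded above by $w(p)$.

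There is no real obstacle here: the first inequality is a direct instantiation of Theorem~\ref{safeapprox}, and the second is a one-line computation exhibiting a feasible SDP solution. The only point requiring a small amount of care is checking that the cancellation of the $M(p)$ terms and the $N(\cdot)$ terms is correct, which follows from the linearity of $N$ and the choice $\lambda(q) = \delta_{q,p}$.
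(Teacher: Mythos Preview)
Your proposal is correct and follows essentially the same route as the paper. For the first inequality you both invoke Theorem~\ref{safeapprox} with $T=\Id$; for the second inequality you both take $\lambda(q)=\delta_{q,p}$, the only cosmetic difference being that the paper passes through the equivalent Lagrangian formulation $\inf_{\lambda}\sup_x \sum_q\lambda(q)w(q)+p(x)-\sum_q\lambda(q)q(x)$ (which evaluates to $w(p)$ for this choice of $\lambda$), whereas you plug the same $\lambda$ together with $\eta=w(p)$ directly into the SDP constraint and observe the matrix vanishes.
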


\begin{proof}
The first inequality is a special case of Theorem~\ref{safeapprox}. 
Let us show the second inequality. Let $w\in\fr$ and $p\in P$.
We know that the following equality holds:
\[
\begin{array}{cl}
&\displaystyle{\inf_{\substack{\lambda\in\frp\\ \ \eta\in\rr}} 
\eta\, \mbox{ s.t. }\, M(p)+\eta N(-1)+\sum_{q\in P} \lambda(q) \mybrackets{N(w(q))-M(q)}
\preceq 0}\\
=&\displaystyle{\inf_{\lambda\in\frp}\sup_{x\in\rd} \sum_{q\in P}\lambda(q) w(q)
+p(x)-\sum_{q\in P}\lambda(q)q(x)}
\end{array}
\]
Let us choose $\lambda(q)=1$ if $q=p$ and 0 otherwise. Then,
$\sup_{x\in\rd} \sum_{q\in P}\lambda(q) w(q)+p(x)-\sum_{q\in P}\lambda(q)q(x)=w(p)$. Since   
we have to take the infimum over $\lambda$ so the second inequality holds.
\end{proof}


The monotonicity of the relaxed semantics will be useful for the construction of Kleene iteration. Indeed, we will 
define in the next section a simple Kleene iteration so we show that the relaxed semantics constructed for assignments 
and tests define a monotone map. We adopt the convention that $\lambda(p) w(p)=0$ if $w(p)=\infty$ and $\lambda(p)=0$ for some $p\in P$.  
\begin{prop}\label{prop-aandimonotone}
For $i\in\affect\cup\inter$, the map $v\mapsto \rel{F_i}(v)$ is monotone on the set $\fr^n$.
\end{prop}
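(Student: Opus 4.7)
The plan is to exploit the fact that, for fixed Lagrange multipliers, the expressions~\eqref{eqn} and~\eqref{eqntest} depend on $v$ only through the combination $\sum_{q\in P}\lambda(q)v_{i-1}(q)$, in which the coefficients $\lambda(q)\geq 0$. Monotonicity in $v_{i-1}$ of this combination then propagates through the $\sup$ over $x\in\rd$ and the $\inf$ over $\lambda$ (and $\mu$, in the test case), since both $\sup$ and $\inf$ preserve pointwise inequalities between the underlying families of functions.

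First I would fix $p\in P$, take $v,w\in\fr^n$ with $v\leq w$ componentwise, and focus on the assignment case $i\in\affect$. For each fixed $\lambda\in\frp$ and each $x\in\rd$, set
\[
H(\lambda,x;v) \;:=\; p\circ T(x) + \sum_{q\in P}\lambda(q)\bigl(v_{i-1}(q)-q(x)\bigr).
\]
Because $\lambda(q)\geq 0$ and $v_{i-1}(q)\leq w_{i-1}(q)$ in $\br$, each product $\lambda(q)v_{i-1}(q)$ is bounded above by $\lambda(q)w_{i-1}(q)$ under the announced convention that $\lambda(p)w(p)=0$ when $\lambda(p)=0$. Summing over the finite set $P$ preserves this inequality, so $H(\lambda,x;v)\leq H(\lambda,x;w)$. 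Taking the $\sup$ over $x\in\rd$ preserves the pointwise inequality in $\lambda$, and the subsequent $\inf$ over $\lambda\in\frp$ preserves the resulting inequality in $v$, yielding $\mybrackets{\rel{F_i}(v)}(p)\leq \mybrackets{\rel{F_i}(w)}(p)$. Since $p$ was arbitrary, $\rel{F_i}(v)\leq \rel{F_i}(w)$.

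For the test case $i\in\inter$, formula~\eqref{eqntest} differs from~\eqref{eqn} only by the additional, $v$-independent term $-\mu r(x)$ and by the extra minimisation variable $\mu\in\rr_+$. The same argument applies verbatim: for fixed $(\lambda,\mu)$ with $\lambda\geq 0$, the integrand depends on $v$ only through $\sum_{q\in P}\lambda(q)v_{i-1}(q)$, which is monotone non-decreasing in $v_{i-1}$, and this monotonicity survives both the $\sup$ in $x$ and the joint $\inf$ in $(\lambda,\mu)$.

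The only real subtlety is the careful treatment of infinite values arising when some $v_{i-1}(q)=\pm\infty$; this is precisely the purpose of the convention on $\lambda(p)w(p)$ stated immediately before the proposition, which makes the map $t\mapsto \lambda(q)\,t$ monotone on the whole of $\br$ and thus closes the argument with no further technical obstacle.
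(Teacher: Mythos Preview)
Your proof is correct and follows essentially the same approach as the paper: both isolate the $v$-dependence in the linear combination $\sum_{q\in P}\lambda(q)v_{i-1}(q)$ with non-negative coefficients, observe that this is monotone in $v$, and then use that an infimum of monotone maps is monotone. The only cosmetic difference is that the paper first pulls the $v$-independent part $\sup_{x\in\rd}\bigl(p\circ T(x)-\sum_q\lambda(q)q(x)-\mu r(x)\bigr)$ out as a constant before taking the infimum, whereas you carry the inequality through the $\sup$ over $x$ and then the $\inf$ over $(\lambda,\mu)$; the underlying idea is the same.
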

\begin{proof}
We only give a proof for the tests so let $i$ be in $\inter$. Let $v,w$ be in $\fr^n$. 
We have for all $p\in P$, the following equality:
\[
\mybrackets{\rel{F_i}(v)}(p)=\displaystyle{\inf_{\substack{\lambda\in \frp\\ \mu\in \rr_+}}\sum_{q\in P}\lambda(q) v_{j-1}(q)
+\sup_{x\in\rd} p\circ T(x)-\sum_{q\in P}\lambda(q)q(x)-\mu r(x)}\enspace .
\]
Since $\lambda\in\frp$, the map $v\mapsto \sum_{q\in P}\lambda(q) v_{j-1}(q)$ is monotone and since 
$\sup_{x\in\rd} p\circ T(x)-\sum_{q\in P}\lambda(q)q(x)-\mu r(x)$ is a constant (independent of $v$) the function 
$v\mapsto \rel{F_i}(v)$ is an infimum of monotone maps and thus a monotone map. 
\end{proof}

Now we focus on a property which will be useful to construct a policy iteration for quadratic zones. 
We are interested in the case in which the relaxed semantics $\mybrackets{\rel{F_i}(v)}(p)$ is equal to the supremum 
in Equation~\eqref{eqn} for some $\lambda$ and equal to the supremum in Equation~\eqref{eqntest} for some pair $(\lambda,\mu)$ 
in the case of tests. A simple condition provides the desired result: if, for all $i\in\affect\cup\inter$, Slater constraint 
qualification holds i.e. there exists $x\in\rd$ such that $q(x)<v_{i-1}(q),\ \forall\, q\in P$ (and for this particular $x$, 
$r(x)<0$ holds for a test $r$), then there exists some $\lambda$ (and a couple $(\lambda,\mu)$ is the case of tests) 
which achieves the minimum in~\eqref{eqn} (the minimum in~\eqref{eqntest} in the case of tests).
Moreover the over-approximation we make is not in general that big; in some cases, Inequality \ref{safeapprox} 
is even an equality.
\begin{prop}[Selection Property]
\label{sdualth}
We assume that there exist $\lambda\in\frp$ and $\mu\in\rr_+$ such that:
\[ 
\displaystyle{\sup_{x\in\rd} p\circ T(x)+\sum_{q\in P}\lambda(q)\mybrackets{v_{i-1}(q)-q(x)}}-\mu r(x)
\] is finite.
If the set: 
\[
\{x\in\rd\mid q(x)-v_{i-1}(q)<0,\ \forall\, q\in P\}\cap \{x\in\rd\mid r(x)<0\}
\] is non-empty,
then there exist $\lambda^*\in\frp$ and $\mu^*\in\rr_+$ such that:
\[
\mybrackets{\rel{F_i}(v)}(p)=\sup_{x\in\rd} p\circ T(x)+\sum_{q\in P}\lambda^*(q)\mybrackets{v_{i-1}(q)-q(x)}-\mu^* r(x)
\]
Furthermore, if $p\circ T$ is a concave quadratic form and if for all $q\in P$ such that $v_{i-1}(q)<+\infty$, $q$ is a 
convex quadratic form, then: 
\[
\mybrackets{\rel{F_i}(v)}(p)=\mybrackets{\sha{F_i}(v)}(p)\enspace .
\]
\end{prop}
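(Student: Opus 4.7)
The plan is to view $\rel{F_i}(v)(p)$ as the Lagrangian dual value of the constrained maximization problem defining $\sha{F_i}(v)(p)$ in~\eqref{eqnabstracttest}, and then to apply two classical results from convex analysis: first, that Slater's condition forces the dual objective to be coercive, so that its infimum is attained; second, that when the primal is convex, Slater's condition yields strong duality, so the relaxation is tight.

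For the first assertion, I would study the function
\[
g(\lambda,\mu) = \sup_{x\in\rd}\Big( p\circ T(x) + \sum_{q\in P}\lambda(q)\bigl(v_{i-1}(q)-q(x)\bigr) - \mu r(x)\Big),
\]
which is convex and lower semi-continuous on $\frp\times\rr_+$ as a pointwise supremum of affine functions of $(\lambda,\mu)$. By the standing hypothesis, $g$ is finite somewhere. Fix a Slater point $\bar x$ and set $\delta := \min\bigl(\min_{q:\,v_{i-1}(q)<+\infty}(v_{i-1}(q)-q(\bar x)),\, -r(\bar x)\bigr) > 0$. Evaluating the sup at $\bar x$ yields
\[
g(\lambda,\mu) \geq p\circ T(\bar x) + \delta\Big(\sum_{q:\,v_{i-1}(q)<+\infty}\lambda(q)+\mu\Big),
\]
where the convention $\lambda(q) v_{i-1}(q)=0$ when $v_{i-1}(q)=+\infty$ forces $\lambda(q)=0$ on the infinite coordinates (otherwise $g=+\infty$). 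Since $P$ is finite, this linear lower bound in $(\lambda,\mu)$ shows that all sublevel sets of $g$ are bounded. A convex lower semi-continuous coercive function attains its minimum on its effective domain, producing the required $(\lambda^*,\mu^*)$.

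For the second assertion, the additional assumption that $p\circ T$ is concave while each active $q$ (and $r$, on the test branch) is convex turns the primal problem~\eqref{eqnabstracttest} into a genuinely convex program. Slater's qualification is exactly the hypothesis that the set $\{x : q(x)-v_{i-1}(q)<0,\ \forall q\}\cap\{x : r(x)<0\}$ is non-empty. Invoking the standard strong duality theorem for convex optimization under Slater's condition (e.g.~\cite[Section~5.3]{AuTe}) gives $\rel{F_i}(v)(p)=\sha{F_i}(v)(p)$.

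The main subtlety is the bookkeeping induced by the convention on infinite values of $v_{i-1}$: the coercivity estimate only controls the multipliers on coordinates where $v_{i-1}(q)$ is finite, and the coordinates where it is infinite have to be handled by noting that any nonzero multiplier there makes $g$ infinite. Once this case analysis is cleanly folded into the proof, both assertions reduce to textbook Lagrangian duality for convex problems.
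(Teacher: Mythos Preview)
Your proposal is correct and follows essentially the same approach as the paper's proof: define the dual function $g$, show it is convex, lower semi-continuous, proper, and level-bounded via the Slater point, then invoke a standard existence-of-minimiser result, and finish with the classical strong duality theorem under Slater's condition for the second assertion. Your treatment of the coordinates with $v_{i-1}(q)=+\infty$ is slightly more explicit than the paper's, but otherwise the two proofs coincide.
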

\begin{proof}
The proof is given in the appendix.
\end{proof}
In the case of assignment, we can reformulate a selection property as a particular case of Proposition~\ref{sdualth},
since we can replace $\mu$ by 0 and remove the set $\{x\in\rd\mid r(x)<0\}$. We do not give the details for that. 
\begin{remark}
We can apply Proposition~\ref{sdualth} in the case of intervals, zones and linear templates: the functions $p\in P$
are all linear in these cases. When programs contain only linear expressions in assignments and tests, 
Proposition~\ref{sdualth} implies that $F^{\sharp}=F^{\calR}$. Unfortunately, this seems to be the only simple case
in which the relaxation is exact. In other words, the linear templates of Sankaranarayanan et al. are optimal in some sense. 
Indeed assume there is only one quadratic template $p$ and consider the assignment $y=T(x)$; $T$ being affine.
To evaluate the abstract semantic, we have to solve an optimisation problem of the form:
\[
\begin{array}{cc}
\operatorname{Max}& p(y)\\
\text{s.t.}& y=T(x)\\
& p(x)\leq \alpha
\end{array}
\]
Due to the constraint $p(x)\leq \alpha$, the function $p$ must be convex for the feasible set to be convex. However the same function also 
appears as the objective function and thus, $p$ must be concave for the optimisation problem to be tractable by convex programming methods. This is possible only if $p$ is affine.
\end{remark}

The introduction of the duality provides a reformulation of Equation~\eqref{eqn} and Equation~\eqref{eqntest}:
the relaxed semantics can be rewritten as the infimum of affine functions of the function $v_{i-j}$ (when $v_{i-j}(p)\in\rr$ for all $p\in P$).
Once again, this reformulation will be useful to construct a policy iteration since it allows to solve at each step a linear 
program. So, let us fix $\lambda\in\frp$ and observe that the sum $\sum_{q\in P} \lambda(q) v_{i-1}(q)$ 
does not depend on the variable $x$ in Equation~\eqref{eqn}. 

Let $v$ be in $\fr^n$. We shall need the following notation.
\begin{iteMize}{$\bullet$}
\item For $i\in\affect$, we now define, for $\lambda\in\frp$, $F_i^{\lambda}(v)$ by: 
\begin{equation}
\label{affinepol}
\mybrackets{F_i^{\lambda}(v)}(p):=\sum_{q\in P}\lambda(q)v_{i-1}(q)+V_i^{\lambda}(p)\enspace .
\end{equation}
\begin{equation}
\label{value}
\mbox{where } V_i^{\lambda}(p):=\sup_{x\in\rd} p\circ T(x)-\sum_{q\in P}\lambda(q)q(x)
\end{equation}
\item For $i\in\inter$, we define, for $\lambda\in\frp$ and $\mu\in\rr_+$, $F_i^{\lambda,\mu}(v)$ by:
\begin{equation}
\label{affinepol2}
\mybrackets{F_i^{\lambda,\mu}(v)}(p):=\sum_{q\in P}\lambda(q)v_{i-1}(q)+V_i^{\lambda,\mu}(p) 
\end{equation}
\begin{equation}
\label{valuetest}
\mbox{where } V_i^{\lambda,\mu}(p):=\sup_{x\in\rd} p\circ T(x)-\sum_{q\in P}\lambda(q)q(x)-\mu r(x)\enspace .
\end{equation}
\end{iteMize}
The relaxed functional can now be readily rewritten as follows.
\begin{lem}
\label{lemma1}
For $i\in\affect$ and $j\in\inter$: 
\[
\mybrackets{\rel{F_i}(v)}(p)=\inf_{\lambda\in\frp}\mybrackets{F_i^{\lambda}(v)}(p),\qquad
\mybrackets{\rel{F_j}(v)}(p)=\inf_{\substack{\lambda\in\frp\\ \mu\in\rr_+}} \mybrackets{F_j^{\lambda,\mu}(v)}(p)\enspace .
\]
\end{lem}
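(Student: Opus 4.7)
The plan is to argue directly from the definitions. Recall from Equation~\eqref{eqn} that for $i\in\affect$,
\[
\mybrackets{\rel{F_i}(v)}(p)=\inf_{\lambda\in \frp}\sup_{x\in\rd}\Bigl(p\circ T(x)+\sum_{q\in P}\lambda(q)\bigl(v_{i-1}(q)-q(x)\bigr)\Bigr),
\]
and analogously from Equation~\eqref{eqntest} for $j\in\inter$. The key observation is trivial: inside the inner supremum over $x$, the quantity $\sum_{q\in P}\lambda(q)v_{i-1}(q)$ does not depend on $x$, so it may be pulled out of the $\sup_x$. After this splitting, what remains under the supremum is exactly $p\circ T(x)-\sum_{q\in P}\lambda(q)q(x)$ in the assignment case, and $p\circ T(x)-\sum_{q\in P}\lambda(q)q(x)-\mu r(x)$ in the test case, which by definitions \eqref{value} and \eqref{valuetest} are $V_i^{\lambda}(p)$ and $V_j^{\lambda,\mu}(p)$ respectively.

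Concretely, I would write for $i\in\affect$,
\[
\sup_{x\in\rd}\Bigl(p\circ T(x)+\sum_{q\in P}\lambda(q)\bigl(v_{i-1}(q)-q(x)\bigr)\Bigr)
=\sum_{q\in P}\lambda(q)v_{i-1}(q)+\sup_{x\in\rd}\Bigl(p\circ T(x)-\sum_{q\in P}\lambda(q)q(x)\Bigr),
\]
which by \eqref{affinepol} and \eqref{value} equals $\mybrackets{F_i^{\lambda}(v)}(p)$. Taking the infimum over $\lambda\in\frp$ on both sides yields the first identity. The argument for $j\in\inter$ is the same, with the additional constant-in-$x$ treatment of the $-\mu r(x)$ term, which stays under the sup, and an infimum taken jointly over $\lambda\in\frp$ and $\mu\in\rr_+$, recovering $F_j^{\lambda,\mu}$ via \eqref{affinepol2} and \eqref{valuetest}.

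The only delicate point, which is not really a difficulty but deserves a sentence, is well-definedness of the sum $\sum_{q\in P}\lambda(q)v_{i-1}(q)$ when some $v_{i-1}(q)=+\infty$. Here one invokes the convention stated just before Proposition~\ref{prop-aandimonotone}, namely that $\lambda(q)v_{i-1}(q)=0$ when $v_{i-1}(q)=+\infty$ and $\lambda(q)=0$; under this convention the pulling-out step is valid term by term, since a nonzero $\lambda(q)$ multiplying an infinite $v_{i-1}(q)$ makes the whole infimum problem trivially equal to $+\infty$ and both sides agree. Since $P$ is finite (by the standing assumption of Section~\ref{quadratic}), no convergence issue arises. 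Hence there is no real obstacle: the lemma is essentially a bookkeeping identity, and the proof reduces to the elementary fact that constants may be taken out of a supremum.
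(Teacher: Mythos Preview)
Your proposal is correct and matches the paper's approach exactly: the paper states the lemma with the preamble ``The relaxed functional can now be readily rewritten as follows'' and gives no proof, treating it as immediate from the definitions \eqref{eqn}, \eqref{eqntest}, \eqref{affinepol}--\eqref{valuetest}. Your argument---pulling the $x$-independent term $\sum_{q\in P}\lambda(q)v_{i-1}(q)$ out of the supremum---is precisely the one-line justification the paper leaves implicit, and your remark on the $0\cdot(+\infty)$ convention is an appropriate (if optional) clarification.
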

We remark that $V_i^{\lambda}$ and $V_i^{\lambda,\mu}$ are the value of an unconstrained quadratic maximization problem. So, the functions
$V_i^{\lambda}$ and $V_i^{\lambda,\mu}$ can be determined algebraically. Moreover, $V_i^{\lambda}(p)$ and $V_i^{\lambda,\mu}(p)$ can 
take the value $+\infty$ if the matrices associated to the quadratic functions 
$x\mapsto p\circ T(x)-\sum_{q\in P}\lambda(q)q(x)$ and $x\mapsto p\circ T(x)-\sum_{q\in P}\lambda(q)q(x)-\mu r(x)$ 
are not negative semi-definite. Furthermore, the latter matrices depend on $\lambda\in\frp$ and on a couple $(\lambda,\mu)\in\frp\times\rr_+^d$.
So, to ensure the finiteness of the value, it suffices to choose $\lambda$ (or a couple $(\lambda,\mu)$ in the case 
of tests) 
such that the corresponding matrix is negative semi-definite. 
We denote by $A^{\bullet}$, the Moore-Penrose general inverse of $A$,
which can be defined as
$\lim_{\epsilon\to 0} A^T(AA^T+\epsilon Id)^{-1}$. 
The following proposition shows how to evaluate the functions $V_i^{\lambda}$
and $V_i^{\lambda,\mu}$. We only consider the evaluation of $V_i^{\lambda,\mu}$ since the evaluation of the former function can be viewed
as a special case of the evaluation of the latter.
We recall that all the assignments $T$ are such that $p\circ T$
is a quadratic function for all $p\in P$, and all 
functions $r$ arising in tests are quadratic. Moreover,
we recall that we write a quadratic function $g$ as $x\mapsto x^TA_gx+b_g^Tx+c_g$. 

\begin{prop}
\label{value2}
Let $i$ be in $\inter$ and let an assignment $T$ such that, for all $q\in P$, $q\circ T$ is a quadratic function.
Let $p$ be in $P$ and let $(\lambda,\mu)$ be a couple in $\frp\times\rr_+$, we write:
\[
\begin{array}{ccl}
\mathcal{A}_p(\lambda,\mu)&=&A_{p\circ T}-\displaystyle{\sum_{q\in P}}\lambda(q)A_q-\mu A_{r}\\
\mathcal{B}_p(\lambda,\mu)&=&b_{p\circ T}-\displaystyle{\sum_{q\in P}}\lambda(q)b_q-\mu b_{r}\\
\mathcal{C}_p(\lambda,\mu)&=&c_{p\circ T}-\displaystyle{\sum_{q\in P}}\lambda(q)c_q-\mu c_{r}
\end{array}
\]
If $(\lambda,\mu)\in\frp\times\rr_+$ satisfies $\mathcal{A(\lambda,\mu)}\preceq 0$ and $\mathcal{B}_p(\lambda,\mu)\in 
\operatorname{Im}(\mathcal{A(\lambda,\mu)})$ then:
\[
V_i^{\lambda,\mu}(p)=-\dfrac{1}{4}\mathcal{B}_p(\lambda,\mu)^T\mathcal{A}_p(\lambda,\mu)^{\bullet}\mathcal{B}_p(\lambda,\mu)
+\mathcal{C}_p(\lambda,\mu).
\]
Otherwise $V_i^{\lambda,\mu}(p)=+\infty$.
\end{prop}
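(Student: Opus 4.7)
The plan is to reduce the problem to a classical unconstrained quadratic optimization and then analyze it by a standard trichotomy depending on the definiteness of the quadratic part and the compatibility of the linear part with its image.

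First, I would unfold the objective. Since $p\circ T$, each $q\in P$, and $r$ are quadratic, their affine combination $p\circ T(x)-\sum_{q\in P}\lambda(q)q(x)-\mu r(x)$ is also quadratic, with the decomposition
\[
g(x):=x^{T}\mathcal{A}_p(\lambda,\mu)\, x+\mathcal{B}_p(\lambda,\mu)^{T}x+\mathcal{C}_p(\lambda,\mu)
\]
by collecting coefficients of the quadratic, linear, and constant parts respectively. Hence computing $V_i^{\lambda,\mu}(p)$ amounts to computing $\sup_{x\in\rd}g(x)$ for a quadratic $g$ with symmetric quadratic part $A:=\mathcal{A}_p(\lambda,\mu)$, vector $b:=\mathcal{B}_p(\lambda,\mu)$, and constant $c:=\mathcal{C}_p(\lambda,\mu)$.

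Next I would establish the two sources of unboundedness. If $A\not\preceq 0$, pick a vector $v$ with $v^{T}A v>0$ (e.g.\ an eigenvector for a positive eigenvalue); then $g(tv)=t^{2}v^{T}Av+t\, b^{T}v+c\to +\infty$ as $t\to+\infty$, so $V_i^{\lambda,\mu}(p)=+\infty$. If $A\preceq 0$ but $b\notin\operatorname{Im}(A)$, decompose $b=b_1+b_2$ with $b_1\in\operatorname{Im}(A)$ and $b_2\in\operatorname{Ker}(A)=\operatorname{Im}(A)^{\perp}$ (using symmetry of $A$), with $b_2\neq 0$. Then $g(t\, b_2)=t^{2}\, b_2^{T}Ab_2+t\, b^{T}b_2+c=t\,\|b_2\|^{2}+c\to+\infty$, so again $V_i^{\lambda,\mu}(p)=+\infty$. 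This disposes of the ``otherwise'' clause.

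Finally, in the remaining case where $A\preceq 0$ and $b\in\operatorname{Im}(A)$, the function $g$ is concave, so its supremum is attained at any critical point, characterized by $\nabla g(x)=2Ax+b=0$. I would take $x^{*}=-\tfrac12 A^{\bullet}b$: since $AA^{\bullet}$ is the orthogonal projector onto $\operatorname{Im}(A)$ and $b\in\operatorname{Im}(A)$, one has $AA^{\bullet}b=b$, hence $2Ax^{*}+b=0$. Substituting and using that at any critical point $x^{*T}Ax^{*}=-\tfrac12 b^{T}x^{*}$, one obtains
\[
g(x^{*})=-\tfrac12 b^{T}x^{*}+b^{T}x^{*}+c=\tfrac12 b^{T}x^{*}+c=-\tfrac14\, b^{T}A^{\bullet}b+c,
\]
which is exactly the claimed formula. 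Notice that the value does not depend on the choice of critical point: any other solution differs by an element of $\operatorname{Ker}(A)$, which is $b$-orthogonal because $b\in\operatorname{Im}(A)=\operatorname{Ker}(A)^{\perp}$.

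The only mild obstacle is checking the Moore--Penrose identity $AA^{\bullet}b=b$ on $\operatorname{Im}(A)$ and the symmetry of $A^{\bullet}$ (inherited from $A=A^{T}$ via the limit definition $A^{\bullet}=\lim_{\epsilon\to 0}A^{T}(AA^{T}+\epsilon \Id)^{-1}$); everything else is an elementary first-order analysis of a concave quadratic.
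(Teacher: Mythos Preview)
Your proof is correct and follows essentially the same approach as the paper: reduce to $\sup_x g(x)$ for the quadratic $g(x)=x^{T}Ax+b^{T}x+c$, treat the cases $A\not\preceq 0$ and $b\notin\operatorname{Im}(A)$ by exhibiting an unbounded ray, and in the concave case solve $2Ax+b=0$ via the Moore--Penrose inverse. Your handling of the two unbounded cases is in fact slightly cleaner than the paper's (you let the dominant $t^{2}$ term do the work when $A\not\preceq 0$, and use the orthogonal decomposition $b=b_1+b_2$ explicitly when $b\notin\operatorname{Im}(A)$), but the argument is the same classical one.
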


The proof is classical but it is provided for the convenience of the reader. 

\begin{proof}
Let $(\lambda,\mu)\in\frp\times\rr_+$ and $p\in P$. We can rewrite 
\[V_i^{\lambda,\mu}(p)=\sup_{x\in\rd}
x^T\mathcal{A}_p(\lambda,\mu)x+\mathcal{B}_p(\lambda,\mu)^Tx+\mathcal{C}_p(\lambda,\mu).
\]
We assume that $\mathcal{A(\lambda,\mu)}$ is not a semi-definite matrix, there exists $y\in\rd$ and $\gamma\in\rr$, such that 
$\gamma^{2}y^T\mathcal{A(\lambda,\mu)}y>0$ and $\gamma\mathcal{B}_p(\lambda,\mu)^Ty\geq 0$, now taking $t>0$, we get 
$t\gamma^{2}y^T\mathcal{A(\lambda,\mu)}y>0$ and $t\gamma\mathcal{B}_p(\lambda,\mu)^Ty\geq 0$, this leads to 
$V_i^{\lambda,\mu}(p)\geq ty^T\mathcal{A}_p(\lambda,\mu)y+t\gamma\mathcal{B}_p(\lambda,\mu)^Ty+\mathcal{C}_p(\lambda,\mu)$ and 
we conclude that $V_i^{\lambda,\mu}(p)=+\infty$.

We assume that $\mathcal{A(\lambda,\mu)}\preceq 0$, the function $x\mapsto x^T\mathcal{A}_p(\lambda,\mu)x+\mathcal{B}_p(\lambda,\mu)^Tx+\mathcal{C}_p(\lambda,\mu)$
is concave and differentiable then its maximum is achieved at every zero of its derivative. A maximizer $\bar{x}$ satisfies 
$2\mathcal{A}_p(\lambda,\mu)\bar{x}+\mathcal{B}_p(\lambda,\mu)=0$. If $\mathcal{B}_p(\lambda,\mu)\notin\operatorname{Im}(\mathcal{A(\lambda,\mu)})$,
the equation $2\mathcal{A}_p(\lambda,\mu)\bar{x}+\mathcal{B}_p(\lambda,\mu)=0$ does not have a solution. Taking a non-zero 
vector $y$ in the kernel of $\mathcal{A(\lambda,\mu)}$ such that $\mathcal{B}_p(\lambda,\mu)^{T}y>0$, we get for all $\gamma>0$
that $V_i^{\lambda,\mu}(p)\geq \gamma\mathcal{B}_p(\lambda,\mu)^{T}y+\mathcal{C}_p(\lambda,\mu)$. We conclude that
$V_i^{\lambda,\mu}(p)=+\infty$.
Suppose that $\mathcal{B}_p(\lambda,\mu)\in\operatorname{Im}(\mathcal{A(\lambda,\mu)})$, then 
$\bar{x}=-\dfrac{1}{2}\mathcal{A}_p(\lambda,\mu)^{\bullet}\mathcal{B}_p(\lambda,\mu)+z$ where $z$ belongs to the kernel of
$\mathcal{A}_p(\lambda,\mu)$. Finally, we conclude that: 
\[
V_i^{\lambda,\mu}(p)=-\dfrac{1}{4}\mathcal{B}_p(\lambda,\mu)^T\mathcal{A}_p(\lambda,\mu)^{\bullet}\mathcal{B}_p(\lambda,\mu)
+\mathcal{C}_p(\lambda,\mu)
\]
since it suffices to compute $\bar{x}^T\mathcal{A}_p(\lambda,\mu)\bar{x}+\mathcal{B}_p(\lambda,\mu)^T\bar{x}
+\mathcal{C}_p(\lambda,\mu)$ to find the value of $V_i^{\lambda,\mu}(p)$.
\end{proof}

\subsection{Abstraction of loops}\label{subsec-loop}

The only point that we did not address yet is how to interpret 
the
semantics equation
at a control point $i$ in which we
collect the values of the variables before the entry in the body
of the loop, at control point $i-1$, with the values of the variables
at the end of the body of the loop, at control point $j$: 
$F_i(X)=X_{i-1}\cup X_{j}$.
By using Equation~\eqref{absmap}, for $v\in\fr^n$, $\sha{F_i}(v)=(v_{i-1}^\star\sqcup v_{j}^\star)^\dagger$.
As for zones, we notice that the union of two such $P$-convex functions $v_{i-1}$ and $v_j$ is
directly given by taking their maximum on each element of the basis of quadratic functions $P$.
Nevertheless, during the fixpoint iteration (as in Section \ref{solving}) the functions $v_{i-1}$ and $v_j$ are not 
necessarily $P$-convex. Moreover, if we take the abstract semantics $\sha{F_i}(v)$, we do not have
an infimum of linear forms (or at least a maximum of linear forms) on the abstract values $v_{i-1}$ and $v_j$, a formulation that we need. 
Finally, we relaxed the abstract semantics $\sha{F_i}(v)$, using Remark~\ref{galoisprop}, by the supremum itself
and $\rel{F_i}(v)=\sup(v_{i-1},v_{j})$. By this reduction, the map $v\mapsto \rel{F_i}(v)$ is monotone on $\fr^n$.
Recall that $\union$ denotes the set of coordinates
such that the concrete semantics is a meet operation.
For $i\in \union$, the monotonicity of the map $v\mapsto \rel{F_i}(v)$ 
follows trivially from the previous observations. 
Combining this with Proposition~\ref{prop-aandimonotone} above, we eventually get:
\begin{prop}
\label{monotonysup}
The map $v\mapsto \rel{F}(v)$ is monotone on $\fr^n$.
\end{prop}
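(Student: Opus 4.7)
The plan is to check monotonicity coordinate by coordinate, since the order on $\fr^n$ is componentwise. The index set $\{1,\ldots,n\}$ partitions into $\affect$, $\inter$, and $\union$ according to whether the corresponding concrete semantic operation is an assignment, a test, or a union, and the relaxed functional $\rel{F}$ has been defined separately on each of these three kinds of coordinates. It therefore suffices to establish monotonicity of $v \mapsto (\rel{F}(v))_i$ for an arbitrary index $i$ of each type, and then assemble the three cases.

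For $i \in \affect \cup \inter$, monotonicity has already been obtained in Proposition~\ref{prop-aandimonotone}: the argument there writes $\rel{F_i}(v)(p)$ as an infimum over Lagrange multipliers of affine expressions of the form $\sum_{q\in P}\lambda(q) v_{i-1}(q) + V_i^{\lambda}(p)$ (or the analogue $F_i^{\lambda,\mu}$ for a test), where the coefficients $\lambda(q)$ are non-negative; each such expression is monotone in $v_{i-1}$, and an infimum of monotone maps is monotone. I will simply invoke this result.

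For $i \in \union$, the definition given just before the statement is $\rel{F_i}(v) = \sup(v_{i-1}, v_j)$, where $j$ is the index at the end of the loop body. The supremum of two functions is taken pointwise over $P$, and each of the two projections $v \mapsto v_{i-1}$ and $v \mapsto v_j$ is trivially monotone on $\fr^n$; taking pointwise maxima preserves monotonicity, so $v \mapsto \rel{F_i}(v)$ is monotone.

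Finally, since monotonicity of a map into a product order is equivalent to monotonicity of each coordinate, and all coordinates fall into one of the three cases above, $v \mapsto \rel{F}(v)$ is monotone on $\fr^n$. There is no real obstacle here beyond carefully quoting the right earlier result for each class of coordinate; the only place where one must be slightly careful is the convention $\lambda(p) w(p) = 0$ when $w(p) = \infty$ and $\lambda(p) = 0$, which is needed to make the affine rewriting in~\eqref{affinepol} and~\eqref{affinepol2} monotone on the extended value set $\fr$, but this convention is already in force from the statement of Proposition~\ref{prop-aandimonotone}.
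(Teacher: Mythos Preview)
Your proposal is correct and follows exactly the paper's own argument: the paper also invokes Proposition~\ref{prop-aandimonotone} for $i\in\affect\cup\inter$, observes trivially that $\rel{F_i}(v)=\sup(v_{i-1},v_j)$ is monotone for $i\in\union$, and combines the cases componentwise. Your write-up is slightly more detailed (spelling out why the pointwise supremum is monotone and recalling the $\lambda(p)w(p)=0$ convention), but there is no substantive difference.
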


To sum up, we conclude from Theorem~\ref{safeapprox} that we can compute over-approximations 
of~\eqref{optpb} and~\eqref{eqnabstracttest} by solving a SDP problem.

\begin{thm}
\label{polytime}
In the case of quadratic templates, for a program with
an affine arithmetics, the relaxed functional $\rel{F}$
can be evaluated using Shor's semi-definite relaxation and provides
a sound over-approximation of the abstract functional $\sha{F}$.
\end{thm}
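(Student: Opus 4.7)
The plan is to assemble the theorem from the per-construct results established throughout Section~\ref{quadratic}. Since the collecting semantics $F$ decomposes component-wise into assignments, tests and unions (for loops), I would argue that at each coordinate $i$ we have both (i)~soundness, $\sha{F_i}(v)\leq \rel{F_i}(v)$ for every $v\in\fr^n$, and (ii)~an explicit SDP description of $\rel{F_i}(v)$ whose optimal value can be approximated in polynomial time by the ellipsoid method (or, in practice, by interior point methods), as recalled in Subsection~\ref{shorscheme}.

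First I would handle the case $i\in\affect$. Here the abstract semantic functional is given by~\eqref{optpb}, which is a (non-convex) quadratic maximisation problem since $p\circ T$ and every $q\in P$ are quadratic by assumption. The relaxed value $\mybrackets{\rel{F_i}(v)}(p)$ defined in~\eqref{eqn} is, by Lagrange weak duality, an upper bound of~\eqref{optpb}; this is exactly the content of Theorem~\ref{safeapprox}. Moreover, reformulation~\eqref{SDP2} exhibits $\mybrackets{\rel{F_i}(v)}(p)$ as the optimal value of an SDP in the variables $(\lambda,\eta)\in\frp\times\rr$, so it is evaluable by Shor's scheme. The case $i\in\inter$ is analogous, replacing~\eqref{optpb} by~\eqref{eqnabstracttest} and~\eqref{eqn} by~\eqref{eqntest}: Theorem~\ref{safeapprox} still gives $\sha{F_i}(v)\leq \rel{F_i}(v)$, and~\eqref{SDPinter} exhibits each component as an SDP, now parametrised additionally by a Lagrange multiplier $\mu\in\rr_+$ for the test constraint.

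For $i\in\union$, corresponding to loop merges, soundness reduces to a lattice-theoretic observation rather than to duality: by Remark~\ref{galoisprop} we have $v_{i-1}^\star\cup v_j^\star\subseteq (\sup(v_{i-1},v_j))^\star$ after applying $(\cdot)^\dagger$, hence $(v_{i-1}^\star\cup v_j^\star)^\dagger\leq \sup(v_{i-1},v_j)=\rel{F_i}(v)$, so once again $\sha{F_i}(v)\leq \rel{F_i}(v)$. No SDP is required here, since the pointwise supremum over the finite set $P$ is immediate. Combining the three cases coordinate-wise yields $\sha{F}(v)\leq \rel{F}(v)$ for all $v\in\fr^n$, i.e.\ the soundness claim, while Equations~\eqref{SDP2} and~\eqref{SDPinter} together with the trivial pointwise sup for $i\in\union$ give the effective SDP-based evaluation procedure.

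There is no genuine obstacle: the theorem is essentially a summary statement gathering Theorem~\ref{safeapprox}, Remark~\ref{galoisprop}, and the SDP reformulations~\eqref{SDP2} and~\eqref{SDPinter}. The only point that deserves emphasis in writing is that the hypothesis ``affine arithmetic'' guarantees that for every $p\in P$ and every assignment map $T$ in the program, the composition $p\circ T$ remains quadratic (since a quadratic form composed with an affine map is quadratic), so that Definition~\ref{quadfct} applies uniformly and Shor's relaxation~\eqref{shordual} is available at each assignment and test node.
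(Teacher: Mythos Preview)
Your proposal is correct and follows the same approach as the paper, which presents Theorem~\ref{polytime} as a summary statement without a formal proof block: the paper simply writes ``To sum up, we conclude from Theorem~\ref{safeapprox}\ldots'' immediately before stating the theorem. You have correctly identified and assembled the relevant ingredients---Theorem~\ref{safeapprox} for $i\in\affect\cup\inter$, the SDP reformulations~\eqref{SDP2} and~\eqref{SDPinter}, the treatment of $i\in\union$ via Remark~\ref{galoisprop} as in Subsection~\ref{subsec-loop}, and the role of the affine-arithmetic hypothesis in keeping $p\circ T$ quadratic---and your write-up is in fact more explicit than the paper's own treatment.
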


\section{Solving the semantic equation}
\label{solving}
\subsection{Fixpoint equations in quadratic zones}

We recall that $P$ is a finite set of quadratic templates. The map $F$ is a monotone map which 
interprets a program with $d$ variables and $n$ labels in $\parvec$.
We recall that $v^{\star}$ denotes the vector of sets 
$((v_1)^{\star},\cdots,(v_n)^{\star})$ and $\sha{F}(v)=(F(v^{\star}))^{\dag}$ i.e.
$\forall\, i$, $\sha{F_i}(v)=(F_i(v^{\star}))^{\dag}$ and $\rel{F}$ is the map, the 
components of which are the relaxed functions of $\sha{F}$.
As usual in abstract interpretation, we are interested in solving the least fixpoint equation: 
\begin{equation}
\label{fixpointeq}
\inf\{v\in\vep^n\mid \sha{F}(v)\leq v\}
\end{equation}
Nevertheless, the function $\sha{F}$ is not easily computable (since the templates $p$ are polynomials, 
the epigraph of $\sha{F}$ can be checked to be a semi-algebraic set, but this of course does
not lead to scalable algorithms). Hence, we solve instead the following 
fixpoint equation in $\fr^n$:
\begin{equation}
\inf\{v\in\fr^n\mid \rel{F}(v)\leq v\}
\end{equation}
and sometimes, we will stop our analysis at some vectors $v$ such that $\rel{F}(v)\leq v$.

We next describe and compare two ways of computing (or approximating)
the smallest fixpoint of the semantic equation: 
Kleene iteration in Section \ref{Kleeneiter}, and policy iteration 
in Section \ref{politer}.
   
\subsection{Kleene iteration}
\label{Kleeneiter}

We note by $\perp$ the smallest element of $\vep^n$ i.e.
for all $i=1,\cdots,n$ and for all $p\in P$, $\perp_i(p)=-\infty$. 
The Kleene iteration sequence in $\vep^n$ is thus as follows:

\begin{enumerate}[(1)]
\item $v^0=\perp$
\item for $k\geq 0$, $v^{k+1}=\vve\circ\rel{F}(v^k)$
\end{enumerate}
Since $v\mapsto \vve\circ\rel{F}$ is monotone over $\vep^n$, 
this sequence is non-decreasing, and its limit is
a candidate to be the smallest fixpoint of the functional
$\vve\circ\rel{F}$. Unfortunately, it cannot be argued that this limit
is always the smallest fixpoint without further assumptions.
Indeed, $\rel{F}$, which is essentially defined as an infimum
of affine functions, is automatically upper semi-continuous, whereas
Scott continuity, i.e., lower semi-continuity in the present setting,
would be required to show
that the function $\rel{F}$ commutes with the supremum of increasing sequences.
However, it can be checked that the map $\vve\circ \rel{F}$ is concave
(as the composition of a concave non-decreasing function, and of a concave
function), and it is known that a concave function with values in $\br$
is continuous on any open set on which it is finite~\cite[Th.~10.1]{Roc}.
Hence, if the supremum of the sequence produced by the Kleene iteration
belongs to such an open set, it is guaranteed to be the smallest fixed
point. A more detailed theoretical analysis of the Kleene iteration,
in the present setting of non-linear templates, appears
to raise interesting technical convex analysis issues, which are beyond
the scope of this paper.


Kleene iteration has the inconvenience that the values $v^k$ which are obtained at a given iteration $k$ (before convergence) do not provide a safe invariant. We shall see that policy
iteration does not have this inconvenient: even if it is stopped at an intermediate step, it does provide a safe invariant.
Moreover, the convergence of the Kleene iteration can be very slow,
so it needs to be coupled with an 
acceleration technique which provides over-approximations.
In our implementation, after a given number of iterations,
and during a few iterations,
we round bounds outwards with a decreasing precision (akin to the widening used in~\cite{FMICS07}). 
Note also that the $P$-convex hull cannot be computed exactly,
so we over-approximate it using Shor relaxation.
This yields an approximation of the sequence $(v^k)_{k\geq 0}$, in which
the approximated vectors $v^k$ do not belong necessarily to $\vep^n$ but only to $\fr^n$. 
 
\subsection{Policy iteration algorithm}
\label{politer}
\subsubsection{Selection property and policy iteration algorithm}
A policy iteration algorithm can be used to solve a fixpoint equation for a 
monotone function written as an infimum of a family of simpler 
monotone functions, obtained by selecting {\em policies},
see~\cite{Policy1,ESOP07} for more background. The idea 
is to solve a sequence of fixpoint problems involving simpler functions.


In the present setting, we look for a representation of the relaxed
function
\begin{align}
\rel{F}= \inf_{\pi\in \Pi}F^\pi \label{e-def-select}
\end{align}
where the infimum is taken over a set $\Pi$ whose elements $\pi$ are called {\em policies}, and
where each function $F^\pi$ is required to be monotone. The correctness
of the algorithm relies on a selection property, meaning in the present
setting that for each argument $(i,v,p)$ of the function $\rel{F}$,
there must exist a
policy $\pi$ such that $\mybrackets{\rel{F}_i(v)}(p)= \mybrackets{F^\pi_i(v)}(p)$. The idea
of the algorithm is to start from a policy $\pi$, compute the smallest
fixpoint $v$ of $F^\pi$, evaluate $\rel{F}$ at point $v$,
and, if $v\neq \rel{F}(v)$, determine the new policy using the selection
property at point $v$.

Let us now identify the policies. 
Lemma~\ref{lemma1} shows that for each template $p$, each coordinate
$\rel{F_i}$ corresponding to an assignment $i\in \affect$ can be written
as the infimum of a family of affine functions $v\mapsto F_i^\lambda(v)$, 
the infimum being taken over the set of Lagrange multipliers $\lambda$. 
The same lemma provides a representation of the same nature when the coordinate
$i\in \inter$ corresponds to a test, with now a couple
of Lagrange multipliers $(\lambda,\mu)$. 
Choosing a policy $\pi$ consists
in selecting, for each $i\in \affect$ (resp. $j\in\inter$) and $p\in P$, a Lagrange multiplier $\lambda$ 
(resp.\ a pair of Lagrange multipliers $\lambda,\mu$). 
We denote by $\pi_i(p)$ (resp.\ $\pi_j(p)$) the value of $\lambda$ (resp.\ $(\lambda,\mu)$)
chosen by the policy $\pi$.

Then, the map $F^\pi$ in~\eqref{e-def-select} is obtained by replacing $\rel{F_i}$ by 
the affine functions appearing in Lemma~\ref{lemma1}, for $i\in \affect\cup\inter$. For coordinates corresponding to loops, i.e.,
$i\in \union$, we take
$F^\pi_i=\rel{F}_i$ (the choice of policy is trivial) since the infimum operation does not
appear in the expression of $\rel{F}$ (see Subsection~\ref{subsec-loop}).



%

%
Proposition~\ref{sdualth} shows that the selection property is valid
under a Slater constraint qualification condition. 
We thus introduce $\FS$, the set of elements of $\fr$ which satisfy the Slater condition
when the component $F_i$ of $F$ corresponds to an assignment or a test. More concretely:
$v\in\FS$, if, for all $i\in\affect$ the set:
\[
\{x\in\rd\mid q(x)<v_{i-1}(q),\ \forall\, q\in P\}
\]
and, for $i\in\inter$ and a test $r$, the set:
\[
\{x\in\rd\mid q(x)<v_{i-1}(q),\ \forall\, q\in P\}\cap \{x\in\rd\mid r(x)<0\}
\]
are non-empty.


\begin{algorithm}
\caption{Policy Iteration in Quadratic Templates}\label{PIQua}
\begin{iteMize}{$\bullet$}
 \item[1] Choose $\pi^0\in\Pi$,  $k=0$.
 \item[2] Compute $V^{\pi^k}=\{V^{\pi^k}(q)\}_{q\in P}$ and define
the associated function $F^{\pi^k}$ by choosing $\lambda$ and $\mu$ according
to policy $\pi^k$ in Proposition~\ref{value2} and Lemma~\ref{lemma1}.
 \item[3] Compute the smallest fixpoint $v^k$ in $\fr^n$ of $F^{\pi^k}$.
 \item[4] Compute $w^k=\clo{v^k}$. 
 \item[5] If $w^k\in\FS$ continue otherwise return $w^k$.
 \item[6] Evaluate $\rel{F}(w^k)$, if $\rel{F}(w^k)=w^k$ 
return $w^k$ otherwise take $\pi^{k+1}$ s.t. $\rel{F}(w^k)= F^{\pi^{k+1}}(w^k)$.
Increment $k$ and go to 2.  
\end{iteMize}
\end{algorithm}

This leads to Algorithm~\ref{PIQua}.
For the third step of Algorithm~\ref{PIQua}, since $P$ is finite and using Lemma ~\ref{lemma1}, 
$F^{\pi^l}$ is monotone and affine $\frf^n$, we compute the smallest fixpoint of $F^{\pi^l}$ by solving the following linear 
program see ~\cite[Section 4]{ESOP07}:
\begin{equation}
\label{LPfix}
\min \sum_{i=1}^n\sum_{q\in P}v^i(q)\ \mathrm{s.t.}\ \mybrackets{F_k^{\pi_k^l}(v)}(q)\leq v_k(q),\ \forall k=1,\cdots,n,
\ \forall q\in P
\end{equation}

\begin{remark} 
As in the case of the earlier policy iteration algorithms in static analysis~\cite{Policy1,ESOP07}, an important
issue is the choice of the initial policy, which may influence the quality
of the invariant which is eventually determined.
In~\cite{Policy1,ESOP07}, the initial policy was selected by assuming
that the infimum is the expression of the functional is attained by
terms corresponding to guard conditions, see specially \S~4.2
in~\cite{ESOP07}. The same principle can be used here. 
Another method to choose an initial policy is to run a few
Kleene iterations, in combination with an acceleration
technique. This leads to a postfixpoint $v$ of $\rel{F}$,
and we select as the initial policy any policy attaining the infimum
when evaluating $\rel{F}(v)$ (i.e., choose for $\pi_i(p)$ or $\pi_j(p)$
any Lagrange multiplier $\lambda$ or pair of Lagrange multipliers
$\lambda,\mu$ attaining the infimum in Lemma~\ref{lemma1}).
%
\end{remark}

\begin{remark}
To ensure the feasibility of the solution of ~\eqref{LPfix} computed by the LP solver, we replace, when possible, 
the constraint set by $F^{\pi^l}(v)+\epsilon\leq v$, where $\epsilon$ is a small constant 
(typically of the order of several $ulp(v)$, where $ulp(v)$, which stands
for ``unit of least precision'', is the minimum over the coordinates $i$ of the differences 
between the nearest floating points around $v_i$).

To obtain safe bounds even though we run our algorithm on machines which uses finite-precision arithmetic, we should
use a guaranteed LP solver (e.g. LURUPA see ~\cite{lurupa}) to check that the solution obtained verifies $F^{\pi^l}(v)\leq v$.
\end{remark}

In the fourth step of Algorithm ~\ref{PIQua}, the operation of closure is, in practice, the relaxation of the $P$-convex 
hull computed by a SDP solver (see Corollary ~\ref{closurecal}). The same corollary shows that the SDP relaxation of the 
$P$-convex hull of some $w\in\fr$, is still smaller than $w$ and this result ensures a gain of precision. 

We can only prove that policy iteration on quadratic templates converges (maybe in infinite time) towards 
a postfixpoint of our abstract functional and that under some technical conditions, it converges 
towards a fixpoint. One interest in policy iteration for static analysis is that we can always terminate 
the iteration after a finite time, and end up with a postfixpoint. 

\begin{thm}
\label{decro}
The following assertions hold:
\begin{enumerate}[\em(1)]
\item $\rel{F}(v^l)\neq v^l\implies \rel{F}(v^l)< v^l$;
\item The sequence $v^l$ computed by Algorithm~\ref{PIQua} is strictly decreasing;
\item The limit $v^\infty$ of the sequence $v^l$ is a postfixpoint: $\rel{F}(v^\infty)\leq v^\infty$.
\end{enumerate}
\end{thm}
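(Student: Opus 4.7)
\emph{Part~(1).} This is immediate from the way the iterates are produced. By step~3 of Algorithm~\ref{PIQua}, $v^l$ is a fixpoint of $F^{\pi^l}$, and since $\rel{F}=\inf_{\pi\in\Pi} F^\pi$ pointwise, we have $\rel{F}(v^l)\leq F^{\pi^l}(v^l)=v^l$. Combined with the hypothesis $\rel{F}(v^l)\neq v^l$, this yields strict inequality at some coordinate, i.e.\ $\rel{F}(v^l)<v^l$ in the coordinatewise order on $\fr^n$.

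\emph{Part~(2).} My strategy is to show that $w^l$ is a strict postfixpoint of $F^{\pi^{l+1}}$: once this is established, the monotonicity of $F^{\pi^{l+1}}$ (which is affine in the assignment/test coordinates by Lemma~\ref{lemma1}, and the trivial choice in the loop coordinates of Subsection~\ref{subsec-loop}) together with the LP characterization~\eqref{LPfix} of its smallest fixpoint give $v^{l+1}\leq w^l\leq v^l$, with a strict drop at any coordinate where $\rel{F}(w^l)(p)<w^l(p)$. The selection rule in step~6 guarantees $F^{\pi^{l+1}}(w^l)=\rel{F}(w^l)$, and the algorithm only continues when $\rel{F}(w^l)\neq w^l$, so the strict postfixpoint property reduces to the key inequality $\rel{F}(w^l)\leq w^l$.

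The main obstacle is precisely this inequality. My plan is the following chain. First, since $w^l=\clo{v^l}$, Remark~\ref{galoisprop} yields $v^{l,\star}=w^{l,\star}$, whence $\sha{F}(v^l)=\sha{F}(w^l)$ componentwise. Second, $\sha{F}(w^l)$ is a $P$-support function, hence $P$-convex, so $\sha{F}(w^l)=\clo{\sha{F}(w^l)}$. Third, from Part~(1) and Theorem~\ref{safeapprox} we have $\sha{F}(v^l)\leq\rel{F}(v^l)\leq v^l$; applying the monotone closure operator to both sides gives $\sha{F}(w^l)\leq\clo{v^l}=w^l$. The delicate final step is to lift $\sha{F}(w^l)\leq w^l$ to $\rel{F}(w^l)\leq w^l$: here the Slater check $w^l\in\FS$ performed in step~5 activates Proposition~\ref{sdualth}, which (under the concavity/convexity hypotheses met in the relevant coordinates) gives the equality $\rel{F}(w^l)=\sha{F}(w^l)$; in the remaining coordinates one exploits the explicit Lagrange multipliers selected at $w^l$ to bound $F^{\pi^{l+1}}(w^l)$ by $w^l$ directly.

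\emph{Part~(3).} By Part~(2), $(v^l)$ is a decreasing sequence in the complete lattice $\fr^n$, so its coordinatewise infimum $v^\infty=\inf_l v^l$ exists. Monotonicity of $\rel{F}$ (Proposition~\ref{monotonysup}) yields $\rel{F}(v^\infty)\leq\rel{F}(v^l)$ for every $l$, while Part~(1) gives $\rel{F}(v^l)\leq v^l$ for every $l$. Taking the infimum over $l$ we conclude $\rel{F}(v^\infty)\leq\inf_l v^l=v^\infty$, as required.
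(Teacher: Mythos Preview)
Your arguments for Parts~(1) and~(3) coincide with the paper's.

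In Part~(2), however, your detour through $\sha{F}$ breaks down. The step you call ``delicate'' really is fatal: you try to lift $\sha{F}(w^l)\leq w^l$ to $\rel{F}(w^l)\leq w^l$ by invoking the second conclusion of Proposition~\ref{sdualth}, namely $\rel{F_i}(v)=\sha{F_i}(v)$. But that equality only holds when $p\circ T$ is concave and every active template $q$ is convex---precisely the linear-template situation singled out in the remark following Proposition~\ref{sdualth}. For genuinely quadratic templates (the whole point of the paper) Shor's relaxation is strict in general, $\rel{F}>\sha{F}$, and your chain $\sha{F}(w^l)\leq w^l\Rightarrow \rel{F}(w^l)\leq w^l$ simply does not go through. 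The vague fallback ``in the remaining coordinates one exploits the explicit Lagrange multipliers'' does not supply an argument either: knowing $F^{\pi^{l+1}}(w^l)=\rel{F}(w^l)$ gives you nothing about $\rel{F}(w^l)\leq w^l$ without independent input.

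The paper avoids this trap by never passing through $\sha{F}$ at all. It argues directly at $v^l$ rather than at $w^l$: Part~(1) already yields $\rel{F}(v^l)\leq v^l$; since $w^l=\clo{v^l}\leq v^l$ and $w^l\in\FS$, also $v^l\in\FS$, so the selection property produces a policy with $F^{\pi^{l+1}}(v^l)=\rel{F}(v^l)<v^l$; then $v^{l+1}$, being the least element of $\{v\mid F^{\pi^{l+1}}(v)\leq v\}$, satisfies $v^{l+1}\leq v^l$, with strictness coming from $F^{\pi^{l+1}}(v^l)<v^l$. No appeal to $\sha{F}$ or to the exactness of the relaxation is needed.
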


\begin{proof}
(1).~Let $l\in\mathbb{N}$. We assume that $l>0$ and $\rel{F}(v^l)\neq v^l$, there exists $\pi^l$ such that, $F^{\pi^l}(v^l)=v^l$ and since 
$\rel{F}=\inf F^{\pi}$, we get $\rel{F}(v^l)\leq F^{\pi^l}(v^l)=v^l$ and from $\rel{F}(v^l)\neq v^l$, we conclude that $\rel{F}(v^l)< v^l$.

(2).~We prove the second point by induction on $l\in\mathbb{N}$. We suppose that $l=0$ if $\rel{F}(v^0)\neq v^0$ and that 
$v^0\in\FS$ otherwise the algorithm stops. There exists $\pi^{1}$ such that $\rel{F}(v^0)=F^{\pi^{1}}(v^0)<v^0$ by the point 
1. Moreover, $v^{1}$ is the smallest element of $\{v\in\fr^n\mid F^{\pi^{1}}(v)\leq v\}$ thus $v^{1}\leq v^{0}$ and since
$F^{\pi^{1}}(v^0)<v^0$ we conclude that $v^{1}<v^{0}$. The same argument holds if $v^{l}<v^{l-1}$ and $v^l\in\FS$. 

(3).~Finally, we deduce from $v^\infty\leq v^l$ that $\rel{F}(v^\infty)\leq \rel{F}(v^l)\leq v^l$. Taking the infimum over $l$, we get $\rel{F}(v^\infty)\leq v^\infty$. 
\end{proof}
\begin{remark}
It is desirable to choose (if possible) the initial policy $\pi^0$ so that the set: 
\[
\{v\in\frf^n\mid \mybrackets {F_i^{\pi_i^0}(v)}(q)\leq v_i(q),\ \forall i=1,\cdots,n,\ \forall q\in P\} 
\]
is non-empty. Indeed, the non-emptyness of this set ensures that the coordinates of the first vector $v^0$ are not 
equal to $+\infty$ and then, by Theorem \ref{decro}, all the terms $w^k$ of the sequence generated by the policy iteration
have coordinates which are not equal to $+\infty$. Then, the policy iteration
algorithm at any step $k$ and at any breakpoint $i$ returns non-trivial invariants of the form $q(x) \leq \alpha$ (with $\alpha:=w^k_i(q)$ finite).
\end{remark}

\begin{remark}
The policy iteration algorithm developed in~\cite{Policy1,ESOP07}
can be recovered as a special case of Algorithm~\ref{PIQua},
when applied to a domain of linear templates
or to the domains of zones or intervals, 
for a program containing only linear expressions in assignments and tests.
Indeed, the main addition in the present algorithm is the presence
of the relaxation, and the latter turns out to be exact in these special cases,
see Proposition~\ref{sdualth}.

%
\end{remark}

\subsection{Max strategy iteration}
Gawlitza and Seidl \cite{DBLP:conf/sas/GawlitzaS10} developed an alternative iteration to compute the
least fixpoint of the relaxed semantics for the quadratic templates. The relaxed semantics which they 
use are constructed from the dual program of Shor's relaxation SDP problem \eqref{shordual}. Our relaxed 
semantics coincide with their relaxed semantics when technical conditions (which are often satisfied)
hold. They obtain a map whose coordinates are the maximum of a finite number of concave functions. 
Their approach consists in solving the least fixpoint equation from below and they initialize
their iteration by the function which is identically equal to $-\infty$. At each step
of their algorithm, they select a function which achieves the maximum. Since they compute the 
least fixpoint from below, the least fixpoint is returned. This implies that their iteration
must be run until a fixpoint is reached whereas our approach allows to stop the iteration at each 
step of the algorithm to provide a valid invariant. A survey~\cite{ConvOpt} recapitulates 
the two approaches.
\subsection{A detailed calculation on the running example}

Now we give details on the harmonic oscillator of Example~\ref{running}.
The program of this example which is given at Figure \ref{Exfinal2} implements an Euler explicit scheme with 
a small step $h=0.01$, that is, which simulates the linear system $(x,v)^T= T(x,v)^T$ with 
\[
T=
\begin{pmatrix}
1&h\\
-h&\ 1-h 
\end{pmatrix}
\]
We want to use the information of a Lyapunov function $\lya$ of the linear system $T$ to compute bounds 
on the values taken by the variables $x$ and $v$ of the simulation:
the function $(x,v):\mapsto (x,v)L(x,v)^{T}$ furnishes a Lyapunov function with 
\[
L= 
\begin{pmatrix}
2&1\\
1&3 
\end{pmatrix}
\] 
We also use the quadratic functions $(x,v)\mapsto x^2$ and $(x,v)\mapsto v^2$ which corresponds 
to interval constraints. We introduce the set of templates $P=\{\x,\vv,\lya\}$ and below
the program it is described the semantic equations for all the three control points.
\begin{figure}
\begin{lstlisting}[frame=single]
x = [0,1];
v = [0,1]; [1] 
h = 0.01; 
while (true) { [2] 
  u = v;
  v = v*(1-h)-h*x;
  x = x+h*u; [3] }
\end{lstlisting}
\[
\begin{array}{rcl}
\sha{F_1}(w)(p)&=&\{\x(x,v)\leq 1,\, \vv(x,v)\leq 1,\, \lya(x,v)\leq 7\}\\
\sha{F_2}(w)(p)&=& (\sup\{w_1^\star(p),w_3^\star(p)\})^{\dagger}\\
\sha{F_3}(w)(p)&=&\displaystyle{\sup_{(x,v)\in(w_2)^{\star}}} p(T(x,v))
\end{array}
\]
\caption{Implementation of the harmonic oscillator and its semantics in $\fr^3$}
\label{Exfinal2}
\end{figure}
Now we are going to focus on the third coordinate of $\mybrackets{\rel{F}(v)}(p)$.
Let us consider, for example, $p=\x$, we get:
$\mybrackets{\rel{F_3}(v)}(\x)=$
\begin{equation}
\label{example}
\inf_{\lambda\in\frp}\sup_{(x,v)\in\rr^2}\sum_{q\in P} \lambda(q)w_2(q)
+(x,v)\left(\begin{pmatrix} 1-\lambda(\x)&h/2\\ h/2&h^2-\lambda(\vv)\end{pmatrix}-\lambda(\lya)L\right)(x,v)^T
\end{equation}
By introducing the following symmetric matrices, we can rewrite ~\eqref{example} as Equation ~\eqref{runningSDP}:
\[
M(\x)=\begin{pmatrix}
0 & 0 & 0\\
0 & 1 & 0\\
0 & 0 & 0
\end{pmatrix},\ 
M(\vv)=\begin{pmatrix}
0 & 0 & 0\\
0 & 0 & 0\\
0 & 0 & 1
\end{pmatrix} \text{ and }
M(\x\circ T)=\begin{pmatrix}
0 & 0 & 0\\
0 & 1 & h/2\\
0 & h/2 & h^2
\end{pmatrix}
\]
\begin{equation}
\label{runningSDP}
\mybrackets{\rel{F_3}(w)}(\x)=\Min_{\substack{\lambda\in\frp\\ \ \eta\in\rr}} 
\eta\, \mbox{ s.t. }\, M(\x\circ T)+\eta N(-1)+\sum_{q=\x,\vv,\lya} \lambda(q) \mybrackets{N(w_{2}(q))-M(q)}
\preceq 0  
\end{equation}
To initialize Algorithm~\ref{PIQua}, we choose a policy $\pi^0$. For the third coordinate
of $\rel{F}$, we have to choose a policy $\pi_3^0$ such that $V_3^{\pi_3^0}(p)$ is finite for every $p=\x,\vv,\lya$. 
We can start, for example, by:
\[
\pi_3^0(\x)=(0,0,1),\ \pi_3^0(\vv)=(0,0,1),\ \pi_3^0(\lya)=(0,0,1)\enspace .
\]
This consists, for $p=\x$, in taking $\lambda(\x)=\lambda(\vv)=0$ and $\lambda(\lya)=1$ in ~\eqref{example}.
By Proposition \ref{value2} we find:
\[
\begin{array}{c}
V_3^{\pi_3^0}(\x )=\sup_{(x,v)\in\rr^2}(x,v)\begin{pmatrix} -1&h/2-1\\ h/2-1&h^2-3\end{pmatrix}(x,v)^T=0\\ 
\\
V_3^{\pi_3^0}(\vv)=\sup_{(x,v)\in\rr^2}(x,v)\begin{pmatrix} h^2-2&h(1-h)-1\\ h(1-h)-1&(1-h)^2-3\end{pmatrix}(x,v)^T=0\\
\\
V_3^{\pi_3^0}(\lya)=\sup_{(x,v)\in\rr^2}(x,v) (T^TLT-L) (x,v)^T=0
\end{array}
\]
The solution of the maximization problems are zero since all the three matrices are negative definite (i.e.
a matrix $B$ is negative definite iff $x^t A x<0$ for all $x\neq 0$). The third matrix $T^TLT-L$ is negative definite 
since $L$ satisfy the Lyapunov condition for the discrete linear system $(x,v)=T(x,v)$.
To compute the least fixpoint of $F^{\pi^0}$, we solve the following linear program (see~\eqref{LPfix}): 
\[\underset{1\leq\beta_1(\x),\ 1\leq\beta_1(\vv),\ 7\leq\beta_1(\lya)}{
\underset{
1\leq\beta_2(\x),\ 1\leq\beta_2(\vv),\ 7\leq\beta_2(\lya)}{
\underset{
\beta_3(\x)\leq\beta_2(\x),\ \beta_3(\vv)\leq\beta_2(\vv),\ \beta_3(\lya)\leq\beta_2(\lya)}{
\underset{
\beta_2(\lya)\leq \beta_3(\x),\ \beta_2(\lya)\leq \beta_3(\vv),\ \beta_2(\lya)\leq \beta_3(\lya)}
{\operatorname{\min} \sum_{i=1}^3\sum_{p\in P}\beta_i(p)}}}}\]
Using solver \texttt{Linprog}, we find:
\[
\begin{array}{llcllcll}
u_1^0(\x)=&1.0000 & \mbox{ } \mbox{ } \mbox{ } & u_2^0(\x)=&7.0000 & \mbox{ } \mbox{ } \mbox{ } &u_3^0(\x)=&7.0000\\
u_1^0(\vv)=&1.0000& \mbox{ } \mbox{ } \mbox{ }& u_2^0(\vv)=&7.0000 & \mbox{ } \mbox{ } \mbox{ } &u_3^0(\vv)=&7.0000\\
u_1^0(\lya)=&7.0000& \mbox{ } \mbox{ } \mbox{ }& u_2^0(\lya)=&7.0000& \mbox{ } \mbox{ } \mbox{ } & u_3^0(\lya)=&7.0000\\
\end{array}
\]
The approximation of the closure of $u^0$, $\clo{u^0}$ is given by a \texttt{Matlab} implementation, using 
\texttt{Yalmip} and \texttt{SeDuMi} returns the vector $w_1^0$:
\[
\begin{array}{llcllcll}
w_1^0(\x)=&1.0000 & \mbox{ } \mbox{ } \mbox{ } & w_2^0(\x)=&4.2000 & \mbox{ } \mbox{ } \mbox{ } &w_3^0(\x)=&4.2000\\
w_1^0(\vv)=&1.0000& \mbox{ } \mbox{ } \mbox{ }& w_2^0(\vv)=&2.8000 & \mbox{ } \mbox{ } \mbox{ } &w_3^0(\vv)=&2.8000\\
w_1^0(\lya)=&7.0000& \mbox{ } \mbox{ } \mbox{ }& w_2^0(\lya)=&7.0000& \mbox{ } \mbox{ } \mbox{ } & w_3^0(\lya)=&7.0000\\
\end{array}
\]
Using again \texttt{Yalmip} with the solver \texttt{SeDuMi}, the vector $w$ is not a fixpoint of $\rel{F}$, so we get the 
new following policy:
\begin{equation*}
\begin{aligned}
\pi_3^1(\x)=(0,0,0.596),\ 
\pi_3^1(\vv)=(0,0,0.3961),\ 
\pi_3^1(\lya)=(0,0,0.9946)
\end{aligned}\enspace .
\end{equation*}
Finally, after 5 iterations we find that the invariant of the loop i.e. $w_2^{\star}$ at control point 2 is the set:
\[
\{x^2\leq 3.5000,\ v^2\leq 2.3333,\ 2x^2+3v^2+2xv\leq 7\}\enspace.
\] 
We draw $w_2^{\star}$ at each iteration of Algorithm~\ref{PIQua} in Figure \ref{appB}.

\begin{center}
\begin{figure}
\begin{center}
\input backgroundtex
\caption{Successive templates along policy iteration, at control point 2, for the harmonic oscillator.}
\label{appB}
\end{center}
\end{figure}
\end{center}

This method is to be compared with the classical Kleene iteration with widening. 
On this example, we find without widening $x^2\leq 3.5000$, $v^2 \leq 2.3333$ and $2x^2+3v^2+2xv\leq 7$ in 1188 iterations
whereas with the acceleration technique described Subsection \ref{Kleeneiter} we find 
$x^2\leq 6.0000$, $v^2 \leq 4.0000$ and $2x^2+3v^2+2xv\leq 10$ in 15 iterations.

\section{Benchmarks}
We implemented an analyzer for the quadratic template domain we presented, 
written in Matlab version 7.8(R2009a).
This analyzer takes a text file in argument. This text file corresponds
to the abstract equation $v=\sha{F}(v)$ where $\sha{F}$ is defined by Equation~\eqref{abstraction}.
The quadratic template can be loaded from a \texttt{dat} file by the analyzer. 
The affine maps are treated in the same manner.  

In this analyzer, we can choose to use the Kleene iteration method
or policy iteration. For the Kleene iteration method, the user gives as an argument a maximal number of iteration and if
the acceleration method has to be applied. The acceleration method start from the iteration $n+1$ ($n$ denotes
the number of lines of the code) and ends eleven iteration after. For the policy iteration method, the user gives the 
\texttt{dat} file defining the initial policy or chooses to make Kleene iterations before determining the initial policy.

For the policy iteration, the user gives also as argument a maximal number of iteration and the policy iteration runs 
until a fixpoint is reached or Slater constraint qualification is no longer satisfied or if the maximal number
of iteration is achieved and so a postfixpoint can be returned by the policy iteration algorithm.
Similarly, the Kleene iteration with acceleration provides a postfixpoint after acceleration and widening to top, 
if the iteration does not converge after a given number of iterations.
The analyzer writes, in a text file, information about time, quality of the invariants found and number of iterations.

For the benchmarks, we used a single core of a laptop PC Intel(R) Duo CPU
P8600 at 2.4 Ghz with a main memory of 4Gb. We indicate in Table~\ref{tablebench}, the name of the program analyzed, the method
used (policy iteration or Kleene iteration) for solving the fixpoint equation, the cardinality of the basis of 
quadratic templates used, the number of lines of C code the program has, the number of variables it manipulates, the
number of loops. Then we indicate the number of iterations made, whether it reaches a fixpoint or (strictly) a postfixpoint.
Finally, the last column concerns the time in seconds to compute the invariant. 
\begin{center}
\begin{figure}
\begin{center}
\begin{tabular}{|ccc|ccc|c|c|c|c|}
\hline
Programs & Method &\#P& \#lines& \#var& \#loops& \#Iter.& Inv. quality & Time (s)\\
\hline
Rotation2&Policy&2&2&2&0&0&Fixpoint& 2.85\\
\hline
Rotation2& Kleene&2&2&2&0&2&Fixpoint& 2.87\\
\hline
Rotation10&Policy&2& 2&10&0&0&Fixpoint& 2.85\\
\hline
Rotation10&Kleene&2&2&10&0&2&Fixpoint&3.04\\
\hline
Filter&Policy&5&3&2&1&3&Fixpoint& 9.93\\
\hline
Filter&Kleene&5&3&2&1&15&Postfixpoint&64.03\\
\hline
Oscillator&Policy&3&3&2&1&5&Fixpoint& 9.92\\
\hline
Oscillator&Kleene&3&3&2&1&15&Fixpoint&38.91\\
\hline
Oscillatorc2&Policy&3&3&4&1&5&Fixpoint& 10.09\\
\hline
Oscillatorc2&Kleene&3&3&4&1&15&Postfixpoint&40.54\\
\hline
Oscillatorc5&Policy&3&3&10&1&5&Fixpoint& 11.43\\
\hline
Oscillatorc5&Kleene&3&3&10&1&16&Fixpoint&57.33\\
\hline
Oscillatorc10&Policy&3&3&20&1&6&Fixpoint& 22.03\\
\hline
Oscillatorc10&Kleene&3&3&20&1&20&Postfixpoint&161.43\\
\hline
Oscillatorc20&Policy&3&3&40&1&6&Fixpoint& 236.40\\
\hline
Oscillatorc20&Kleene&3&3&40&1&20&Fixpoint&1556.90\\
\hline
Symplectic&Policy&5&3&2&1&0&Fixpoint& 4.22\\
\hline
Symplectic&Kleene&5&3&2&1&15&Fixpoint& 66.14\\
\hline
SymplecticSeu&Policy&5&3&2&1&5&Postfixpoint& 12.79\\
\hline
SymplecticSeu&Kleene&5&3&2&1&15&Postfixpoint& 66.02\\
\hline
\end{tabular}
\caption{Benchmarks}
\label{tablebench}
\end{center}
\end{figure}
\end{center}

The file Rotation10 is the problem of Example~\ref{rotation} in dimension 10.
By the fixpoint computation, we prove automatically that the unit sphere in
dimension 10 is invariant by rotation. Both Kleene iteration and policy iteration find the unit sphere as invariant.

The program Filter is an implementation of recursive linear filter of second order. The program is described
Figure ~\ref{filterprog}.
\begin{figure}
\begin{lstlisting}[frame=single]
x = [0,1];
y = [0,1]; [1]
while [2] (true) { 
  x = (3/4)*x-(1/8)*y;
  y = x; [3]
}
\end{lstlisting}
\caption{The program Filter}
\label{filterprog}
\end{figure}
By policy iteration, we find the following set as loop invariant (at control point 2):
\[
\{-0.5\leq x\leq 1,\ -0.5\leq v\leq 1,\ 3x^2+v^2\leq 4\}\enspace .
\]
Although, the Kleene iteration with acceleration finds the following set:
\[
\{-1.8257\leq x\leq 1.8257,\ -3.1623\leq v\leq 3.1623,\ 3x^2+v^2\leq 10\}\enspace .
\]
The program Oscillator is the problem~\ref{running}. The invariant depicted Figure~\ref{running} in Section \ref{intro} 
is found by policy iteration whereas Kleene iteration after applying acceleration techniques 
from the iteration 4 to iteration 15 finds the less precise invariant 
$\{x^2\leq 6.0000,\ v^2\leq 4,\ 2x^2+3v^2+2xv\leq 10\}$, in 
more time.

In order to illustrate the scalability of the method, we considered a higher dimensional analogue of the example 
of Figure~\ref{running}, modelling $N$ coupled harmonic oscillators, $\ddot{x}_i+\dot{x}_i+x_i+\epsilon \sum_{1\leq j\leq N} x_j=0$,
for $1\leq i\leq N$. Hence, $N$ variables $x_i$, $v_i$, and $w_i$ now appear in the discretised scheme instead of $x,v,w$.
We computed automatically a Lyapunov function in \texttt{Matlab}. We used as templates the one arising from the latter 
Lyapunov function, together with $\sum_{i=1}^N x_i^2$ and $\sum_{i=1}^N v_i^2$. We took $\epsilon=0.5$. We made 
tests successively for $N=2,5,10,20$ (Oscillatorc2, Oscillatorc5, Oscillatorc10 and Oscillatorc20). 
As in the program Oscillator, we were interested in the loop invariant. For example, for $N=20$, 
we found with policy iteration algorithm the following set:
\[
\{(x,v)\in\rr^{20}\times\rr^{20}\mid \sum_{i=1}^{20} x_i^2\leq 115.7169,\ \sum_{i=1}^{20} v_i^2\leq 266.3048,\ 
(x,v)L(x,v)^T\leq 350.0690\}
\]  
where $L$ denotes positive semi-definite matrix associated to the Lyapunov function found automatically 
by \texttt{Matlab}.

The Symplectic example implements a discretisation of $\ddot{x}+c\dot{x}+x=0$ with $c=0$
by a symplectic method, considering specially the
case in which $c=0$ (there is no damping).
Then, the dynamical system has imaginary eigenvalues (its orbits are circles),
reflecting the fact that energy is constant.  However,
the Euler scheme, which does not preserve this conservation law,
diverges, so we use a symplectic discretisation scheme (preserving the symplectic form, 
see~\cite{Hairer03}). This is an interesting, highly degenerate,
numerical example from the point of view of 
static analysis, because there is no ``stability margin'',
hence, methods not exploiting the Lyapunov function are likely to produce trivial invariants when $c=0$.
As in Oscillator, we start from a position $x\in [0,1]$ and a speed $v\in [0,1]$.
The discretisation of $\ddot{x}+x=0$ with the symplectic method and a step $\tau=0.1$
gives us the matrix $T$ such that $T_{1,1}=1-\frac{\tau}{2}$, $T_{1,2}=\tau-\frac{\tau^3}{4}$,
$T_{2,1}=-\tau$ and $T_{2,2}=1-\frac{\tau}{2}$.
We use the Lyapunov function $L$ such that $L(x,v)=(x,v)Q(x,v)^T$
with \[Q=\begin{pmatrix}
1& & &0\\
0& & &1-\frac{\tau^2}{4}
\end{pmatrix}\enspace.
\]
The symplectic method ensures that $L(T(x,v))=L(x,v)$. Our method takes
advantage of this conservation law, since $L$ is embedded as a template.
\begin{figure}
\begin{lstlisting}[frame=single]
tau = 0.1;
x = [0,1];
v = [0,1]; [1]
while [2] (true) { 
  x = (1-(tau/2))*x+(tau-((tau^3)/4))*v;
  v = -tau*x+(1-(tau/2))*v; [3]
}
\end{lstlisting}
\caption{An implementation of the symplectic method} 
\end{figure}
The policy iteration returns the following as invariant set at the control point 2: 
\[
\{-1.41333\leq x\leq 1.41333,\ -1.4151\leq v\leq 1.4151,\ x^2+0.9975v^2\leq 1.9975\}\enspace .
\]
Whereas the Kleene iteration returns:
\[
\{-3.16624\leq x\leq 3.16624,\ -3.16628\leq v\leq 3.16628,\ x^2+0.9975v^2\leq 10\}\enspace .
\]
which is less precise. In particular, the Kleene algorithm misses
the invariance of the Lyapunov function.
 
SymplecticSeu is a symplectic method with a threshold on $v=\dot{x}$.
We iterate the Symplectic method while $v\geq\frac{1}{2}$, which gives 
the following code:
\begin{lstlisting}[frame=single]
x = [0,1];
v = [0,1];
tau = 0.1 [1]
while [2] ((v>=1/2)) {
  x = (1-tau/2)*x+(tau-(tau^3)/4)*v;
  v = -tau*x+(1-tau/2)*v; [3]
};
\end{lstlisting}
The policy iteration returns the following set which the invariant found at control point 2: 
\[
\{0\leq x\leq 1.3654,\ 0\leq v\leq 1,\ x^2+0.9975v^2\leq 1.9975\}\enspace .
\]
and the policy iteration returns the following set at control point 3:
\[
\{0.0499\leq x\leq  1.3222,\ 0.5\leq v\leq 0.9950,\ x^2+0.9975v^2\leq 1.9975\}\enspace .
\]
Although, the Kleene iteration with acceleration provides the following set which the invariant found 
at control point 2: 
\[
\{0\leq x\leq 3.1623,\ 0\leq v\leq 3.1662 ,\ x^2+0.9975v^2\leq 10\}\enspace .
\]
and the same set at the control point 3.




\begin{remark}
In the present benchmarks, the execution time of a single policy iteration step
and of a single Kleene iteration step are comparable 
(the acceleration provided by policy iteration comes from the smaller
number of iterations). Indeed, at each Kleene or Policy iteration,
the bottleneck appears to be the evaluation of the
relaxed functional $\rel{F}$, which requires to solve a family
a Shor SDP relaxations. Policy iteration requires in addition to solve
a family of linear programs (to compute the smallest fixpoint
of the current policy). This is technically easier than solving the SDPs,
and generally faster. However, each SDPs is typically local (involving a small number of variables), whereas the linear programs, which couple all the breakpoints and all the templates functions, may be of large size. Hence, one may construct (very large) instances in which solving the linear programs would become the bottleneck.
\end{remark}

\section{Conclusion and future work}

We have presented in this paper a generalization of the linear templates
of Sankaranarayanan et al.~\cite{Sriram1,Sriram2} allowing one to
deal with non-linear templates. 
We showed that in the case of quadratic templates, we can
efficiently abstract the semantic functionals using Shor's relaxation, and
compute the resulting fixpoint using policy iteration. 
Future work include the use of tighter relaxations for quadratic problems.
In particular, sum of squares (SOS) relaxations (see for instance \cite{lasserre} and~\cite{parillo}) would allow us to obtain more accurate safe over-approximation of the abstract
semantic functional for arbitrary polynomial templates and 
for a general program arithmetics. Kleene iteration could be easily
implemented in that way. However, the issue of coupling SOS relaxations
with policy iteration appears to be more difficult. Note also
that unlike Shor relaxation, SOS relaxations are subject 
to a ``curse of dimensionality''. These issues will be examined
further elsewhere.
Another problem is to extend the minimality result of \cite{Policy2} which
is currently only available for the interval domain, to our template domain.  
We intend to study more in-depth the complexity issues raised by our general policy iteration algorithm.  
Finally, we note that a more detailed account of the present
work has appeared in the Phd thesis of the first author~\cite{adjephd}.

\section*{Acknowledgement.}
We thank Thomas Gawlitza and David Monniaux for their
remarks on an earlier version of this paper. We also thank the three
referees for their careful reading and comments.
\bibliographystyle{alpha}
\bibliography{cavbib}

\newpage

\section*{Appendix}
Let $f$ be a function from $\rd$ to $\rr\cup\{+\infty\}$. We say that the function $f$ is proper 
iff there exists $x\in\rd$ such that $f(x)\in\rr$.

We recall that $f$ is convex iff the set $\operatorname{epi}(f):=\{(x,\alpha)\in\rd\times\rr\mid f(x)\leq \alpha\}$ 
is a convex set (i.e. for all $t\in [0,1]$, for all $x,y\in \operatorname{epi}(f)$, $tx+(1-t)y\in \operatorname{epi}(f)$).
Since the intersection of a family of convex sets is a convex set then the pointwise supremum of
a family of convex functions is also a convex function. 
 
The function $f$ is said to be lower semi-continuous iff the sets $\{x\in\rd\mid f(x)\leq \alpha\}$
are (topologically) closed for all $\alpha\in\rr$. Since the intersection of a family of closed sets 
is closed set then the pointwise supremum of a family of lower semi-continuous functions is also a 
lower semi-continuous function. 

The function $f$ is level-bounded iff $\{x\in\rd\mid f(x)\leq \alpha\}$ are bounded set for all $\alpha\in\rr$. 
If $\displaystyle{\lim_{\norm{x}\to +\infty}} f(x)=+\infty$ then $f$ is level-bounded. 

Proposition \ref{sdualth} is deduced from a well-known result of convex optimization:

\begin{prop}
\label{optim_result}
Let $f:\rd\mapsto \rr\cup\{+\infty\}$ be a convex, proper, lower semi-continuous 
and level bounded function. Then $\displaystyle{\inf_{x\in\rd}} f(x)$ is finite and there exists 
$\bar{x}$ such that:
\[
f(\bar{x})=\inf_{x\in\rd} f(x)\enspace .
\]
\end{prop}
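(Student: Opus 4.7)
The plan is to establish existence of a minimizer via the classical direct method of the calculus of variations. The hypothesis of convexity of $f$ is actually not used in the proof of this particular statement (it is part of the framework needed in the application to Proposition~\ref{sdualth}); the argument relies only on properness, lower semi-continuity and level-boundedness.

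First I would invoke properness to fix some $x_0 \in \mathbb{R}^d$ with $f(x_0) \in \mathbb{R}$, and set
\[
\alpha^* := \inf_{x \in \mathbb{R}^d} f(x) \in \mathbb{R} \cup \{-\infty\}.
\]
By definition $\alpha^* \leq f(x_0) < +\infty$. I would then choose a minimizing sequence $(x_n)_{n \in \mathbb{N}}$ in $\mathbb{R}^d$ with $f(x_n) \to \alpha^*$. For $n$ large enough, $f(x_n) \leq f(x_0)$, so the tail of the sequence lies in the sublevel set $S := \{x \in \mathbb{R}^d \mid f(x) \leq f(x_0)\}$. By the level-boundedness hypothesis applied at $\alpha = f(x_0)$, the set $S$ is bounded in $\mathbb{R}^d$, hence the sequence $(x_n)$ is bounded.

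Next I would apply the Bolzano--Weierstrass theorem to extract a subsequence $(x_{n_k})_k$ converging to some point $\bar{x} \in \mathbb{R}^d$. Using lower semi-continuity of $f$, I would conclude
\[
f(\bar{x}) \leq \liminf_{k \to +\infty} f(x_{n_k}) = \alpha^*,
\]
while on the other hand $f(\bar{x}) \geq \alpha^*$ by definition of the infimum. Hence $f(\bar{x}) = \alpha^*$. In particular $\alpha^* \neq -\infty$, for otherwise $f(\bar{x}) = -\infty$ would contradict $f$ taking values in $\mathbb{R} \cup \{+\infty\}$; and $\alpha^* \neq +\infty$ since $\alpha^* \leq f(x_0)$. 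Thus $\alpha^* \in \mathbb{R}$ and is attained at $\bar{x}$.

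There is no genuine obstacle here; the one subtlety is to ensure that we actually exclude $\alpha^* = -\infty$, which is handled precisely by the codomain restriction $f \colon \mathbb{R}^d \to \mathbb{R} \cup \{+\infty\}$ combined with lower semi-continuity at the limit point $\bar{x}$. All three hypotheses (properness, lower semi-continuity, level-boundedness) are used exactly once, and the argument is essentially the one found in standard references such as~\cite[Th.~1.9]{Roc} or Rockafellar--Wets' variational analysis.
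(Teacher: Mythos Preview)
Your argument is correct: the direct method via a minimizing sequence, level-boundedness to get compactness, Bolzano--Weierstrass, and lower semi-continuity at the limit point is the standard route, and your observation that convexity plays no role in this particular statement is accurate. The paper does not supply its own proof of this proposition; it simply refers to \cite[Proposition 3.1.3]{AuTe}, so there is nothing to compare beyond noting that you have written out what the cited reference contains.
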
 
A proof of Proposition \ref{optim_result} can be found in \cite[Proposition 3.1.3]{AuTe}.

\begin{proof}[Proof of Proposition \ref{sdualth}]

We begin the proof by writing:
\[
g(\lambda,\mu)=
\left\{
\begin{array}{lr}
\displaystyle{\sup_{x\in\rd} p\circ T(x)+\sum_{q\in P}\lambda(q)\mybrackets{v_{i-1}(q)-q(x)}}-\mu r(x)& 
\text{if }\lambda\in\frp,\mu\in\rr_+\\
+\infty& \text{ otherwise}
\end{array}
\right.
\]
To show Proposition \ref{sdualth}, it suffices to show that the hypothesis of Proposition
\ref{optim_result} are verified: the function $g$ is convex, lower semi-continuous, proper and level-bounded.
 
The set $P$ is finite so $\fr$ is the finite dimensional vector space $\rr^{|P|}$ where $|P|$ is the cardinality of $P$.
First, the function $g$ is convex and lower-continuous as the pointwise supremum of convex continuous functions.
 
The function $g$ is also proper since $g(\lambda,\mu)>-\infty$, for all $\lambda\in\frp$, for all 
$\mu\in\rr_+$ and there exist $\lambda\in\frp$ and $\mu\in\rr_+$ such that $g(\lambda,\mu)<+\infty$.
 
The function $g$ is level-bounded since $\lambda\in\frp$, $\mu\in\rr_+$ and there exists 
some $\bar{x}\in\rd$ such that, for all $q\in P$, $v_{i-1}(q)-q(\bar{x})>0$ and $r(\bar{x})<0$, we conclude that 
$g(\lambda,\mu)\geq p\circ T(\bar{x})+\min_{q\in P}(v_{i-1}(q)-q(\bar{x}))\sum_{q\in P}\lambda(q)
-r(\bar{x})\mu$ then $g(\lambda,\mu)$ tends to $+\infty$ as 
$\max(\norm{\lambda}_1,|\mu|)$ tends to $\infty$. 

The second part of Proposition~\ref{sdualth} follows from the strong duality theorem for convex 
optimization problems, see e.g. ~\cite[Proposition 5.3.2]{AuTe}. 
\end{proof}


\end{document}